\documentclass[12pt]{article}
\usepackage{geometry}
\usepackage[hidelinks]{hyperref}
\usepackage{setspace}
\usepackage{graphicx} 
\graphicspath{ {./figures/} }
\usepackage{amsmath,amssymb} 
\usepackage{natbib}
\usepackage{subcaption}
\usepackage{amsthm}

\usepackage[dvipsnames]{xcolor}

\newtheorem{theorem}{Theorem}
\newtheorem{corollary}{Corollary}[theorem]

\title{Robust Regression with Student's $T$: The Role of Degrees of Freedom}
\usepackage{authblk}

\author[1]{Amanda Ng}
\author[2]{Shangkai Zhu}
\author[3]{Archer Gong Zhang}
\author[1]{Nancy Reid}

\affil[1]{Department of Statistical Sciences, University of Toronto, ON, Canada}
\affil[2]{Department of Statistics, University of Chicago, IL, United States}
\affil[3]{School of Mathematics and Statistics, University of Glasgow, Scotland, United Kingdom}

\date{}
\begin{document}

\maketitle

\begin{abstract}
Linear regression estimators are known to be sensitive to outliers, and one alternative to obtain a robust and efficient estimator of the regression parameter is to model the error with Student's $t$ distribution.  In this article, we compare estimators of the degrees of freedom parameter in the $t$ distribution using frequentist and Bayesian methods, and then study properties of the corresponding estimated regression coefficient. We also include the comparison with some recommended approaches in the literature, including fixing the degrees of freedom and robust regression using the Huber loss. 
Our extensive simulations on both synthetic and real data demonstrate that estimating the degrees of freedom via the adjusted profile log-likelihood approach yields regression coefficient estimators with high accuracy, performing comparably to the maximum likelihood estimators where the degrees of freedom are fixed at their true values. These findings provide a detailed synthesis of $t$-based robust regression and underscore a key insight: the proper calibration of the degrees of freedom is as crucial as the choice of the robust distribution itself for achieving optimal performance. The {\tt R} package that implements our method is available at \href{https://github.com/amanda-ng518/RobustTRegression}{RobustTRegression}.
\end{abstract}

\section{Introduction}\label{intro}

Robustness is a fundamental concern in statistical inference. The presence of outliers can distort parameter estimates and degrade the reliability of inference. Some methods that have been developed for robust inference include $M$-estimation, with estimating functions designed to downweigh outliers  \citep{huber1973robust}, $L$-estimation, based on linear combinations of order statistics \citep{koenker1987}, and $R$-estimation, based on ranks \citep[Chapter 3]{huber_2009}. 

In this paper, we consider robust estimation in the linear regression model
\begin{equation}\label{eq:linreg}
    y = X\beta + \sigma \epsilon,
\end{equation}
where $y=(y_1, \dots, y_n)$ is a vector of responses, $X$ is an $n\times p$ matrix of covariates with the $i$th row $x_i^T$, and $\epsilon = (\epsilon_1, \dots, \epsilon_n)$ are unobserved random terms of mean zero and variance 1, independent of $y$ and $X$. Following several studies \citep{Lange1989, Zellner1976, singh1988, Fernandez1999}, we assume $\epsilon_i, i = 1, \dots, n$ are independent and identically distributed as Student's $t$ with $\nu$ degrees of freedom. 

Under this assumption, the likelihood function under \eqref{eq:linreg}   is 
\begin{equation}
\label{eq:student_t_likelihood}
    L(\beta, \sigma, \nu; y) =  \frac{\Gamma^n((\nu+1)/2)}{\sigma^n (\pi \nu)^{n/2} \Gamma^n(\nu/2)} \prod_{i=1}^n \left\{1 + \frac{(y_i - x_i^{T}\beta)^2}{\nu \sigma^2} \right\}^{-(\nu+1)/2}. 
\end{equation}
Regardless of whether the covariates $X$ are treated as fixed or random, inference is based on the conditional density of $y$ given $X$, as is usual in regression problems. The estimates derived from \eqref{eq:student_t_likelihood} may be expected to be more robust than least-squares estimates, as the $t$-distribution has heavier tails than the normal distribution, especially for smaller values of $\nu$; see Figure~\ref{fig:t-dens}. As the degrees of freedom $\nu\rightarrow\infty$, the corresponding $t$-density function converges to the normal density function.

The ordinary least squares estimator of $\beta$ is most efficient under the normal model but is also sensitive to outliers.
Student's $t$ models with smaller values of $\nu$ can accommodate heavier tails, but may be inefficient if the data suggests a larger value of  $\nu$.  \citet{Lange1989} suggested estimating $\nu$ when $n$ is large but choosing some fixed value if $n$ is small. \cite{liu1995} described an extended expectation-maximization (EM) algorithm for computing maximum likelihood estimates of the $(\beta,\sigma, \nu)$. \cite{singh1988} proposed a moment-based estimator of $\nu$ for $\nu > 4$.
Several studies have considered Bayesian inference for $\nu$, using various choices of noninformative priors, such as Jeffreys' prior \citep{Fonseca2008, Fernandez1999}, a reference prior \citep{he2021objective}, and a uniform prior \citep{Relles1977}.

\begin{figure}[ht]
    \centering
    \includegraphics[scale=0.15]{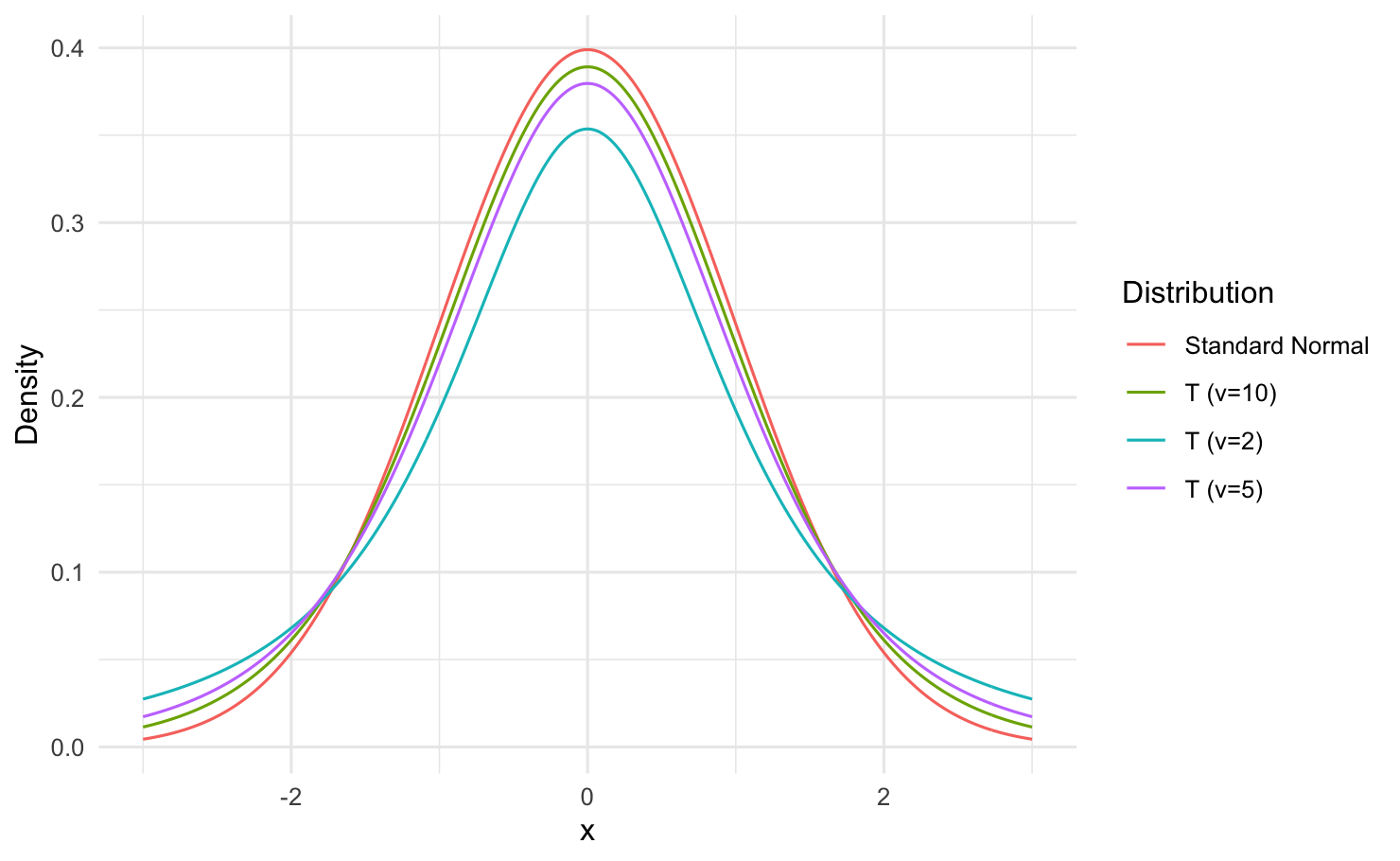}
    \caption{Student's $t$ distribution has heavier tails than the normal distribution, with the difference decreasing as $\nu$ increases. Asymptotically,  as $\nu\rightarrow\infty$ the $t$-distribution converges to the normal distribution.}
    \label{fig:t-dens}
\end{figure}

In Section 2, we compare frequentist and Bayesian approaches to the estimation of $\nu$. In Section 3, we focus on inference for $\beta$ using a two-stage procedure, in which we first estimate the degrees of freedom parameter $\nu$, and then fix $\nu$ at this estimated value in the likelihood function and maximize over $\beta$ and $\sigma$. With large $p$, we find that estimating $\nu$ by maximizing the adjusted profile likelihood function gives mean squared error for $\beta$ comparable to that from minimizing the Huber loss \citep{huber1973robust}. In Section 4, we consider the robustness of our proposed method in the presence of outliers. 
For reproducibility, the {\tt R} \citep{statsRpackage} package for the implementations of our method and some competing methods considered in this paper are publicly available on GitHub at \href{https://github.com/amanda-ng518/RobustTRegression}{RobustTRegression}. 

\section{Estimating the degrees of freedom}\label{est_nu}

\subsection{Likelihood-Based Approaches}
The first estimator of $\nu$ we consider is the maximum likelihood estimator, $\hat{\nu}$, which maximizes the  profile log-likelihood function 
\begin{equation}\label{eq:logprofile}
\ell_{\rm p}(\nu; y) = \ell(\hat{\beta}_\nu, \hat{\sigma}_\nu, \nu; y):
\end{equation}
where $\ell(\theta) = \log L(\theta;y)$, and $(\widehat \beta_\nu, \widehat \sigma_\nu) = \arg\max_{(\beta, \sigma)}\ell(\beta, \sigma, \nu; y)$ are the constrained maximum likelihood
estimators of $(\beta, \sigma)$. \cite{maronna1976} and \cite{Zellner1976} showed that $\widehat \beta_\nu$ and $\widehat\sigma_\nu$ exist.
The asymptotic properties of $\hat{\nu} $ follow from classical likelihood theory:
$$
j_{\rm p}^{1/2}(\hat\nu) (\hat\nu - \nu) \xrightarrow{d} N(0, 1), \quad n \rightarrow\infty,
$$
where $j_{\rm p}(\nu) = -\ell_{\rm p}''(\nu)$ can be computed from the full information matrix; see the supplementary material.

An adjusted profile log-likelihood function was suggested in \cite{cox1987} to adjust for errors in estimation of the nuisance parameters $\lambda = (\beta, \sigma$),
\begin{equation}\label{eq:logadj}
\ell_{\rm adj}(\nu; y) = \ell_{\rm p}(\nu; y) -\frac{1}{2} \log |j_{\lambda\lambda}(\nu, \widehat\lambda_\nu)|,
\end{equation}
where $j_{\lambda\lambda}(\nu, \widehat\lambda_\nu)$ is the submatrix of the observed Fisher information matrix that corresponds to the nuisance parameters. An expression for $j_{\lambda\lambda}(\nu,\lambda)$ is given in the supplementary material. 
We propose a second likelihood-based estimator, $\hat{\nu}_{adj}$, defined as the value that maximizes the adjusted profile log-likelihood function $\ell_{\rm adj}(\nu;y)$.


\subsection{Bayesian Approaches}
\cite{Jeffreys1946-rl} proposed non-informative priors  based on the expected Fisher information matrix, which  is 
\begin{equation}\label{eq:expfisher}
I(\beta, \sigma, \nu) = 
\begin{pmatrix}
\frac{1}{\sigma^2} \cdot \frac{\nu + 1}{\nu + 3} \sum\limits_{i=1}^{n} x_i x_i^T & 0 & 0 \\
0 & \frac{2n}{\sigma^2} \cdot \frac{\nu}{\nu + 3} & -\frac{2n}{\sigma} \cdot \frac{1}{(\nu + 1)(\nu + 3)} \\
0 & -\frac{2n}{\sigma} \cdot \frac{1}{(\nu + 1)(\nu + 3)} & 
\frac{n}{4} \left[
\psi'\left( \frac{\nu}{2} \right) -
\psi'\left( \frac{\nu + 1}{2} \right) -
\frac{2(\nu + 5)}{\nu(\nu + 1)(\nu + 3)}
\right]
\end{pmatrix}.
\end{equation}
Following \cite{Fonseca2008}, we consider Jeffreys' so-called independence prior, in which the prior for the location parameter $\beta$ is treated independently of that for  $(\sigma, \nu)$. Jeffreys' rule priors are discussed in detail in \citet{Kass.Wasserman:1996}.
This leads to the prior
\begin{align}
\pi_{IJ}(\beta)  \propto 1, \quad \quad
\pi_{IJ}(\sigma, \nu)  \propto \sigma^{-1}
 \left \{\frac{\nu}{\nu+3} \right\}^\frac{1}{2} \left\{ \psi'\left(\frac{\nu}{2}\right) - \psi'\left(\frac{\nu + 1}{2}\right) - \frac{2(\nu + 3)}{\nu (\nu + 1)^2}\right\}^\frac{1}{2}, \label{eq:IJ_prior}
\end{align}
where the prior for $(\sigma, \nu)$ is the determinant of the $2 \times 2$ submatrix of \eqref{eq:expfisher}.
Using these priors, we maximized the posterior distribution over $(\beta, \sigma, \nu)$ and extracted the Bayesian maximum a posteriori (MAP) estimate $\hat{\nu}_{Bayes}$.

An alternative partially Bayesian approach is to treat the profile likelihood function for $\nu$ as a true likelihood function and use the prior $\pi(\nu)\propto I_{\nu\nu}^{1/2}(\nu)$ to form a pseudo-posterior distribution. The resulting posterior distribution is proper, as proved in the supplementary material. The MAP estimate of $\nu$ obtained from maximizing this pseudo-posterior distribution is denoted by $\hat\nu_{Pseudo}$.
 
Estimates of $\nu$ from these objective functions are invariant to location-scale transformations. Specifically, for the linear model $y_i = x_i \beta + \epsilon_i$, transforming the response as $y'_i = a y_i + x_i b$ with $a>0$ and a constant vector $b$ shifts the log likelihoods and log pseudo-posterior by a constant, leaving the $\nu$ optimization results unchanged. See the supplementary material for details.

Following \cite{Lange1989}, we estimated the degrees of freedom $\nu$ using the reparametrization $\omega = 1/\nu$. For the normal distribution, this parameterization corresponds to $\omega \rightarrow 0$ or $\nu \rightarrow \infty$. When specifying priors, we applied the change of variable: if $p(\nu)$ is the prior for $\nu$, then the prior for $\omega$ is $\pi(\omega) = p(1/\omega)\,/\omega^2$.

\subsection{Simulations}
In this section, we compare the performance of the four $\nu$-estimation approaches described above. We generated data from \eqref{eq:linreg} with $t$ errors for $\epsilon_i$, with degrees of freedom $\nu \in \{ 2, 5, 10\}$, $\sigma = 1$, and $\beta \in \mathbb{R}^p$  with dimension $p \in \{ 1, 2, 5, 10, 20, 40, 60, 80\}$. The covariates were generated from a multivariate standard normal distribution $\mathbf{N}_p(\boldsymbol{0}, \mathbf{I}_p)$. The choice of the true values of $\beta$ and $\sigma$ is arbitrary, as the inference for $\nu$ is invariant to location-scale changes. The sample sizes were 300, 2500 and 4500 when $\nu = 2, 5,$ and $10$ (regardless of dimension $p$) respectively. 

The BFGS algorithm provided in \texttt{R} \citep{statsRpackage} was used for all optimization steps in our simulations. This is a quasi-Newton method which iteratively approximates the Hessian of the objective function. The starting values of $(\beta, \sigma)$ for constrained maximum likelihood estimation were the ordinary least squares estimate of $\beta$ and the usual residual mean square estimate of $\sigma$. 


All computations were carried out under the $\omega$ parameterization, with an initial guess set to the true value of $1/\nu$, but results are reported for estimation of $\nu$. The covariates were fixed in each set of simulations to reflect modelling the conditional expected value of $y$ given $x$.  We ran $500$ simulations for each combination of $\nu$ and $p$. The supplementary material provides more details about the choice of optimization details, including starting values, sample sizes and handling out-of-bound values.
\begin{table}[ht]
\centering
\begin{tabular}{|c|c|c|c|c|}
\hline
\textbf{p} & $\hat{\nu}$ & $\hat{\nu}_{adj}$ & $\hat{\nu}_{IJ}$& 
$\hat\nu_{Pseudo}$ \\
\hline
1  & 0.3456 & 0.3512 & 0.3545 & 0.3577 \\
2  & 0.3441 & 0.3523 & 0.3527 & 0.3558 \\
5  & 0.3363 & 0.3531 & 0.3438 & 0.3466 \\
10 & 0.3362 & 0.3655 & 0.3422 & 0.3444 \\
20 & 0.3381 & 0.3838 & 0.3405 & 0.3411 \\
40 & 0.4361 & 0.4488 & 0.4322 & 0.4267 \\
60 & 1.2713 & 0.5810 & 1.2935 & 1.2486 \\ 
80 & 1.6681 & 0.9010 & 1.6608 & 1.6751\\
\hline
\end{tabular}
\caption{RMSE of $\hat{\nu}$ under different methods, with true $\nu = 2$ and sample size $n=300$.}
\label{tab:rmse_nu2}
\end{table}

Table \ref{tab:rmse_nu2} presents the root mean square error (RMSE) averaged over the simulations, for the four estimators of $\nu$, when the true value of $\nu = 2$ and the sample size $n=300$. For small values of $p$, the maximum profile likelihood estimator $\hat{\nu}$ has the smallest RMSE, but all four methods are comparable.  For $p=60$ or $80$,  the adjusted profile likelihood estimator $\hat{\nu}_{adj}$ has the smallest RMSE, and seems to adapt well to 
high-dimensional settings. The Bayesian estimators have similar RMSE when compared to each other and to the likelihood estimators. Other choices of $\nu$, $n$, and $p$, reported in the supplementary material, give similar conclusions. If $p$ is not large compared to $n$, we recommend the $\hat{\nu}$ and for larger $p$, $\hat{\nu}_{adj}$.

\section{Estimating the regression coefficient} \label{est_beta}

In linear regression, estimating the coefficient is the primary objective. To evaluate the effect of $\nu$-estimation in supporting coefficient estimation, we designed a simulation study using the stack loss data of \cite{brownlee1965}.
It has 21 responses ($Y$) in an industrial process, along with three explanatory variables: air flow ($X_1$), temperature ($X_2$), and acid concentration ($X_3$). 

\subsection{Simulation Settings}

We conducted simulations according to the following three data generation methods:

\begin{enumerate}
    \item In the first method, we used the original design matrix $X$ from the stackloss dataset, which contains $n = 21$ observations and $p = 4$ predictors (including an intercept). The least-squares estimates of $\beta$ and $\sigma$ for this dataset were treated as the true values. We generated response values according to
    \begin{equation}\label{eq:sim_1_eqt}
        y_i = x_i^T \beta + \epsilon_i, \, i = 1, \dots, 21, 
    \end{equation}
    with $X$ from the original data, and $t_2$ errors $\epsilon_i$ . 
    \item In the second method, we expanded the data to $p \in \{ 40,80,120\}$ and $n = 210$ to study the performance in a higher-dimensional regime with a larger sample size. The first four predictors (including the intercept) were based on the original stackloss predictors, in which we retained the original 21 observations and generated an additional 189 observations with 3 predictors from a multivariate normal distribution with the same empirical mean vector and covariance matrix obtained from the original design matrix. All the remaining $p-4$ predictors were generated independently from a standard normal distribution. The corresponding true $\beta$ vector had the first four coefficients being the OLS estimates from the original data and the remaining coefficients set as zero. 
    \item In the third method, we generated the data using (\ref{eq:sim_1_eqt}),
with $\epsilon_i$ following either $t_2$ or $N(0,1)$ to understand the effect of error distribution. Data was simulated with these $(n,p)$ settings: $(20, 4)$, $(500, 4)$, and $(100, 50)$.
\end{enumerate}

We considered three approaches for the estimation of  $\nu$: profile likelihood $\hat{\nu}$, adjusted profile likelihood $\hat{\nu}_{adj}$, and Jeffreys $\hat{\nu}_{Bayes}$. We also considered some fixed value other than the true $\nu = 2$. For Bayesian methods, we only considered the Jeffreys approach because both Bayesian methods give very similar $\nu$-estimation in Section~\ref{est_nu}, and full independence Jeffreys is easier to implement, as it does not first construct the profile likelihood function. In each simulation, we estimated or fixed $\nu$ from the four approaches, then obtained $(\widehat \beta_\nu,\widehat\sigma_\nu)$ by optimizing the Student's $t$ likelihood function. We also considered the ordinary least squares (OLS) and Huber regression estimators \citep{huber1973robust} of $\beta$ for comparison. For Huber's method, the tuning parameter was automatically selected using the {\tt rlmDataDriven} {\tt R} package \citep{rlmDataDrivenRpackage}. This procedure was repeated over 500 simulations, again keeping the design matrix $X$ fixed while varying only the error term.

OLS imposes the largest penalty on the outliers compared to other methods. $t_1$ imposes a lighter penalty than Huber regression with tuning parameter $c= 1.5$). As $\nu$ increases, the loss function behaves more similarly to the OLS loss function. See Figure~\ref{fig:loss-func}.

\begin{figure}[ht]
    \centering
    \includegraphics[scale = 0.5]{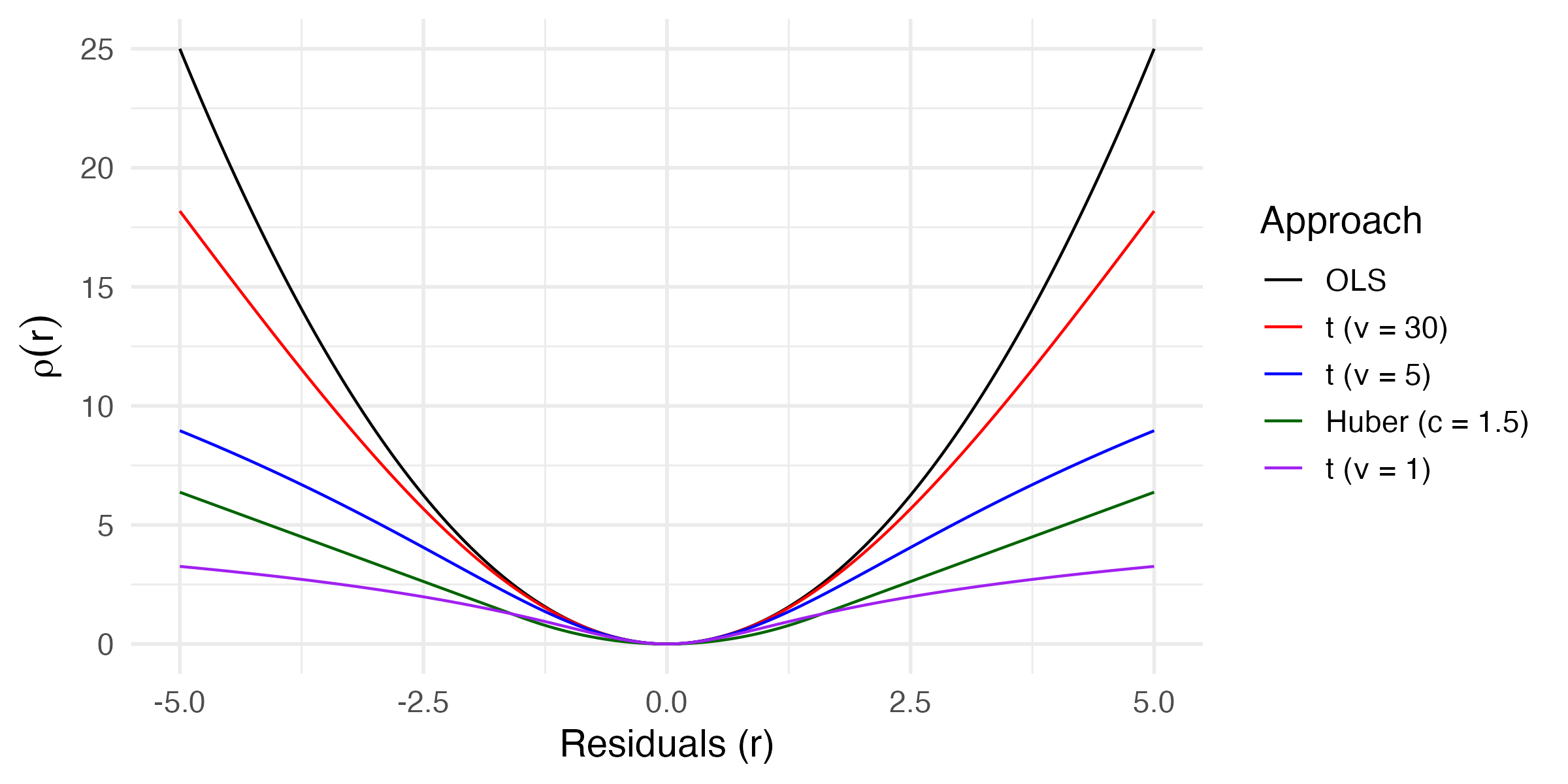}
    \caption{Loss function of OLS vs Huber vs t distribution }
    \label{fig:loss-func}
\end{figure}

The simulation RMSE for $\hat{\beta}$ was computed across the 500 simulation repetitions, excluding the intercept term in the regression coefficient:
\begin{equation}
    RMSE(\hat{\beta}) = \frac{1}{500}\sum_{k=1}^{500}\sqrt{\frac{1}{p-1}\sum_{j=2}^p (\hat\beta_{jk} - \beta_{jk})^2},
\end{equation} 
where $\hat{\beta}_{jk}$ is the $k$th simulation's estimate of the $j$th component of the regression coefficient.

\subsection{Results}
\begin{figure}[h!]
    \centering
    \caption{Overall RMSE of $\hat{\beta}$ using stackloss data (data generation method 1)}
    \begin{subfigure}[t]{0.48\linewidth}
        \centering
        \includegraphics[width=\linewidth]{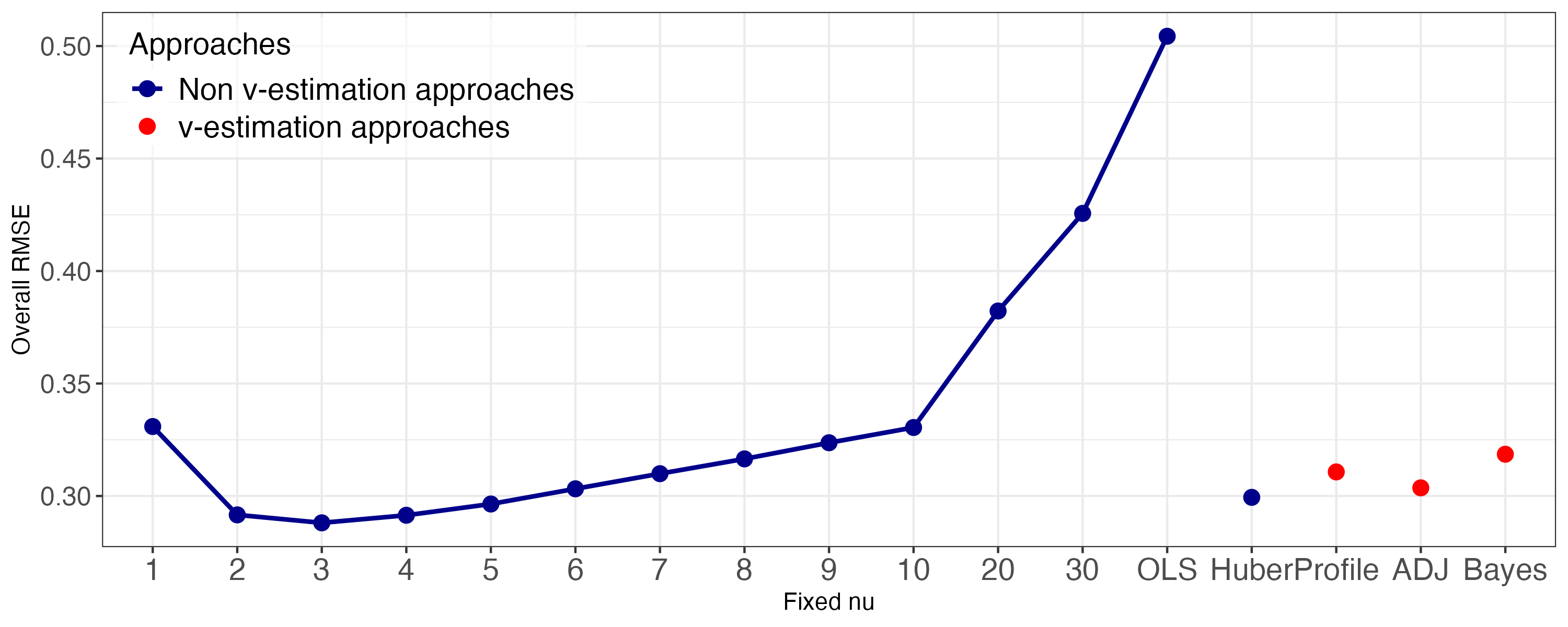}
        \caption{$n=21, p=4$}
        \label{fig:rmse_stackloss_2a}
    \end{subfigure}
    \hfill
    \begin{subfigure}[t]{0.48\linewidth}
        \centering
        \includegraphics[width=\linewidth]{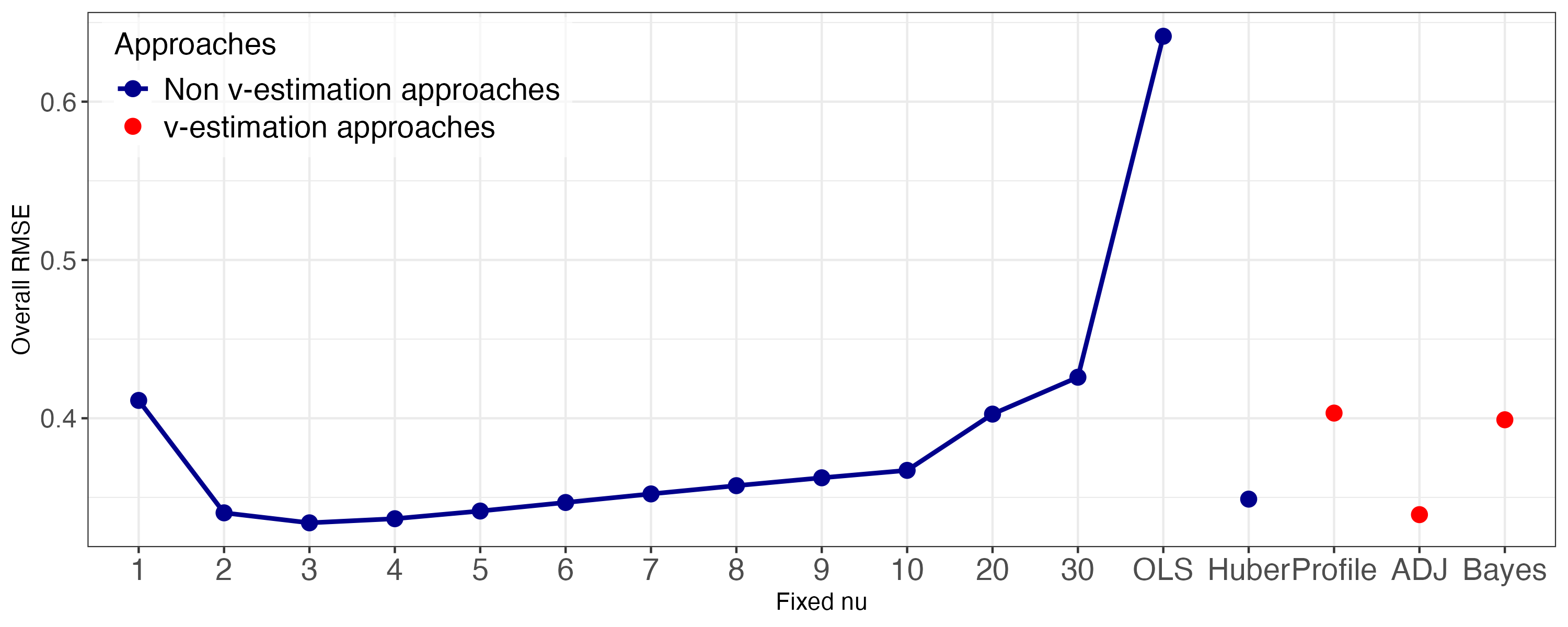}
        \caption{$n=210, p=40$}
        \label{fig:rmse_stackloss_2b}
    \end{subfigure}
    \vspace{1em}
    \begin{subfigure}[t]{0.48\linewidth}
        \centering
        \includegraphics[width=\linewidth]{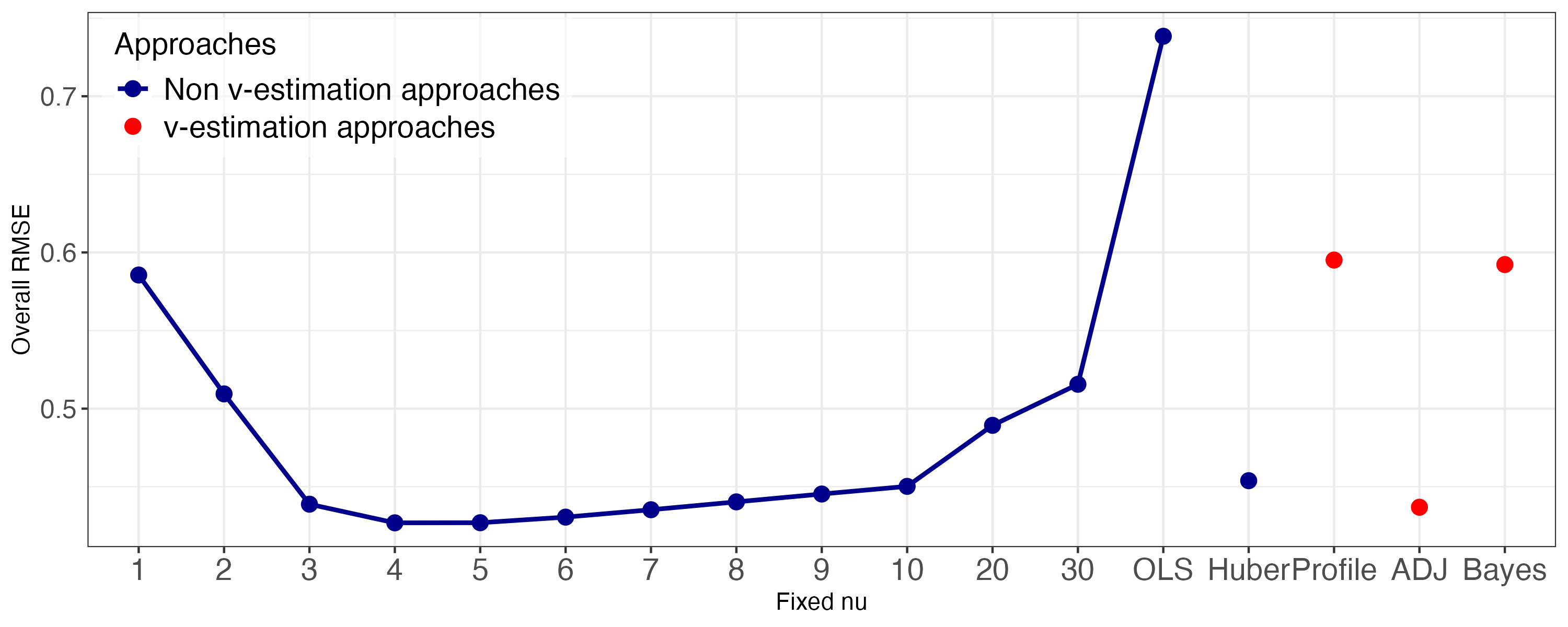}
        \caption{$n=210, p=80$}
        \label{fig:rmse_stackloss_2c}
    \end{subfigure}
    \hfill
    \begin{subfigure}[t]{0.48\linewidth}
        \centering
        \includegraphics[width=\linewidth]{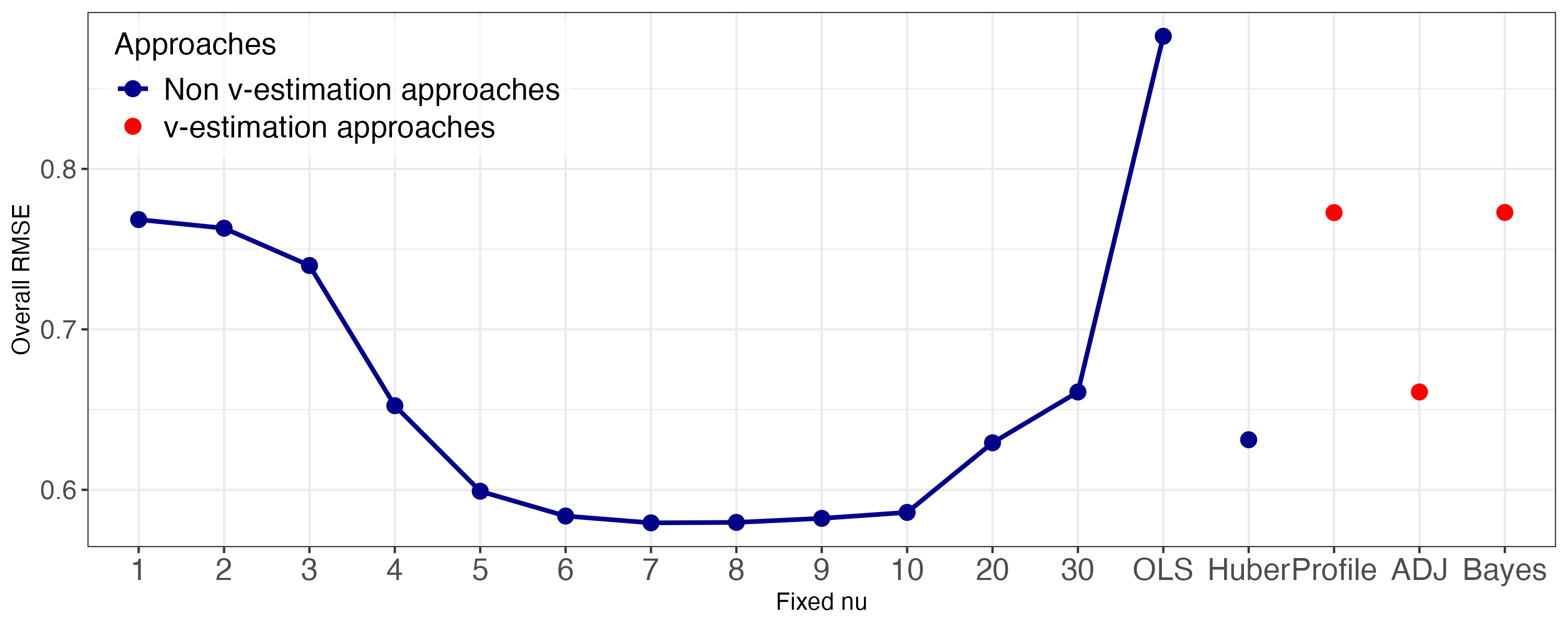}
        \caption{$n=210, p=120$}
        \label{fig:rmse_stackloss_2d}
    \end{subfigure}
    \caption*{The plots compare fixed $\nu$ approaches (in blue) for $\nu \in \{1, \dots, 10\} \cup\{20, 30\}$ and other traditional estimators (OLS and Huber) with the $\nu$-estimation approaches considered (in red).}
    \label{fig:rmse_stackloss}
\end{figure}

From Figure~\ref{fig:rmse_stackloss_2a}, we observe that the lowest overall RMSE is achieved when fixing $\nu = 3$, with the RMSE at the true value $\nu = 2$ is only slightly higher, indicating that the method is not highly sensitive to small misspecifications of $\nu$ around the true value. The slightly better performance at 
$\nu = 3$ compared to the true $\nu = 2$ is likely due to sampling variability in the simulation. As $n$ increases, the true $\nu$ would result in the lowest RMSE. For higher values of $\nu$ (e.g., 
$\nu = 20$ or $\nu = 30$), the RMSE for the estimate of $\beta$ increases notably. The three $\nu$-estimation approaches are comparable to the best fixed-$\nu$ results, and all outperform ordinary least squares. Estimation of $\beta$ using $\hat\nu_{adj}$ gives the smallest RMSE among $\nu$ estimated methods, which is comparable to the RMSE of fixed $\nu = 6$. Huber regression is slightly better than the $\hat\nu_{adj}$ approach. 

According to Figures~\ref{fig:rmse_stackloss_2b} and~\ref{fig:rmse_stackloss_2c}, when $p = 40, 80$ ($p/n = 0.190, 0.381$ respectively), the lowest overall RMSE is achieved by fixing $\nu$ at $3$ or $4$, and the RMSE increases for $\nu>4$. The $\hat{\nu}_{adj}$ approach outperforms Huber regression and gives the lowest RMSE among all $\nu$ estimated methods, which is comparable to fixing at the true $\nu$. In Figure~\ref{fig:rmse_stackloss_2d}, as we further increase $p$ to $120$ ($p/n = 0.571$), the lowest overall RMSE is achieved at larger fixed $\nu = 7$. High dimensionality of $\beta$ gives the model enough flexibility to absorb extreme values into the fitted coefficients, shrinking the residuals relative to the true $t_2$ errors. With these outliers muted, the likelihood favours a larger value of the degrees of freedom $\nu$, similar to selecting a less robust (steeper) loss function in Figure~\ref{fig:loss-func}, since the data no longer exhibit the heavy-tailed behaviour that would justify a small $\nu$. Among the $\nu$-estimation approaches, the $\hat{\nu}_{adj}$ approach performs the best; its RMSE value is relatively large compared to the lowest RMSE that could be achieved by traditional methods. In particular, Huber regression performs better than the $\hat{\nu}_{adj}$ approach.

\begin{figure}[h!]
    \centering
    \caption{Overall RMSE of $\hat{\beta}$ in simulated $t$-error data (data generation method 2)}
    \begin{subfigure}[t]{0.32\linewidth}
        \centering
        \includegraphics[width=\linewidth]{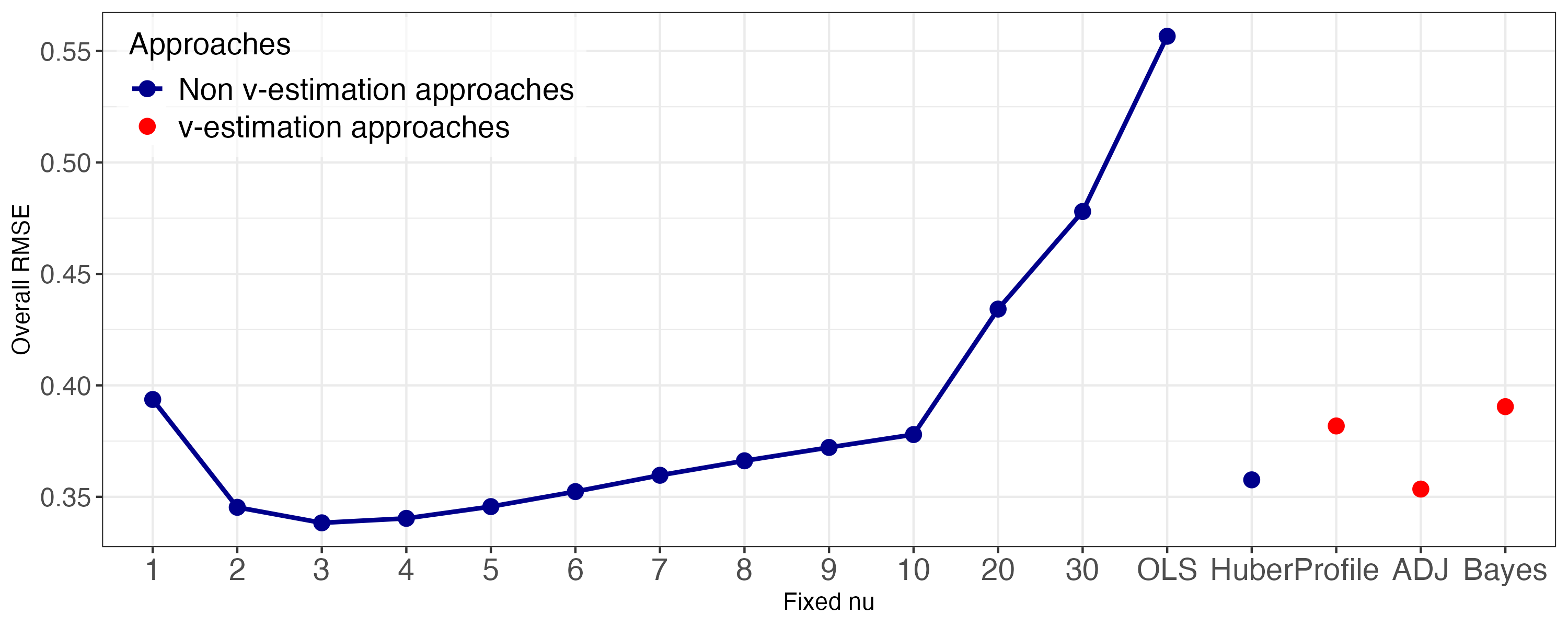}
        \caption{$n=20, p=4$}
        \label{fig:rmse_t_error_a}
    \end{subfigure}
    \hfill
    \begin{subfigure}[t]{0.32\linewidth}
        \centering
        \includegraphics[width=\linewidth]{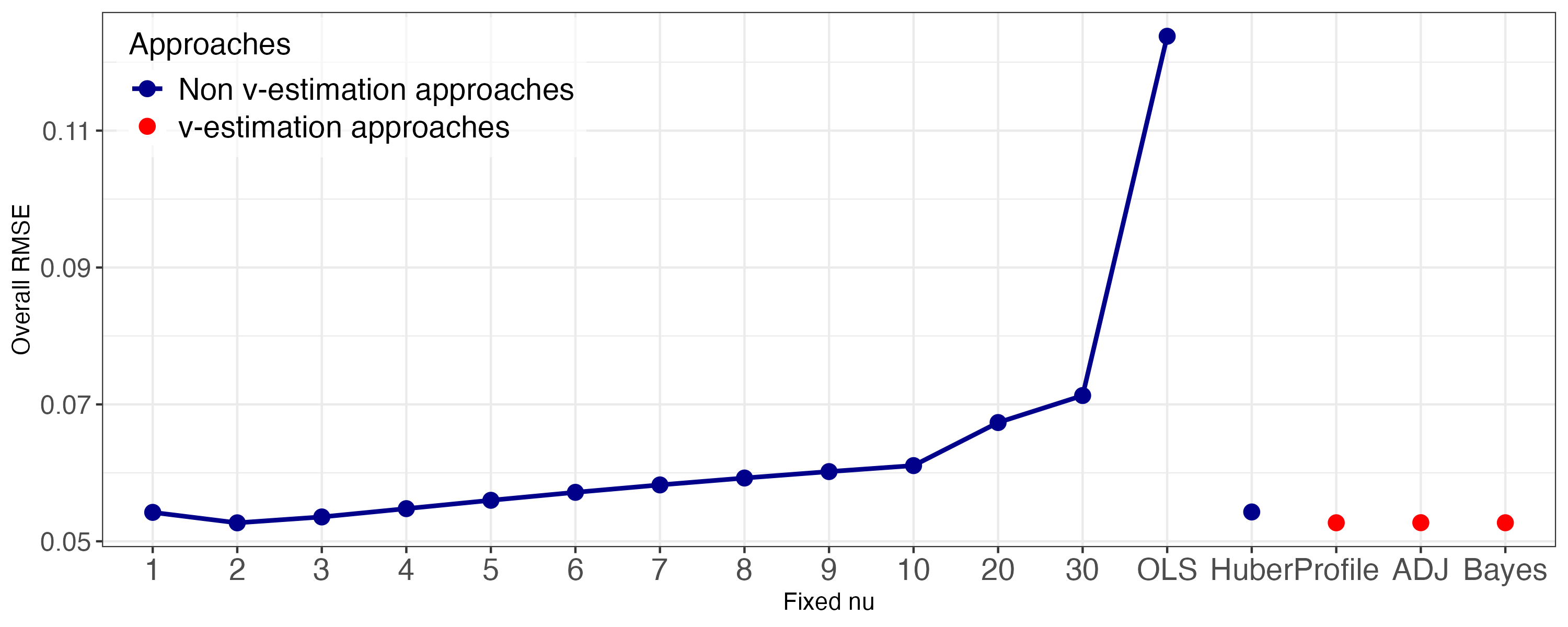}
        \caption{$n=500, p=4$}
        \label{fig:rmse_t_error_b}
    \end{subfigure}
    \hfill
    \begin{subfigure}[t]{0.32\linewidth}
        \centering
        \includegraphics[width=\linewidth]{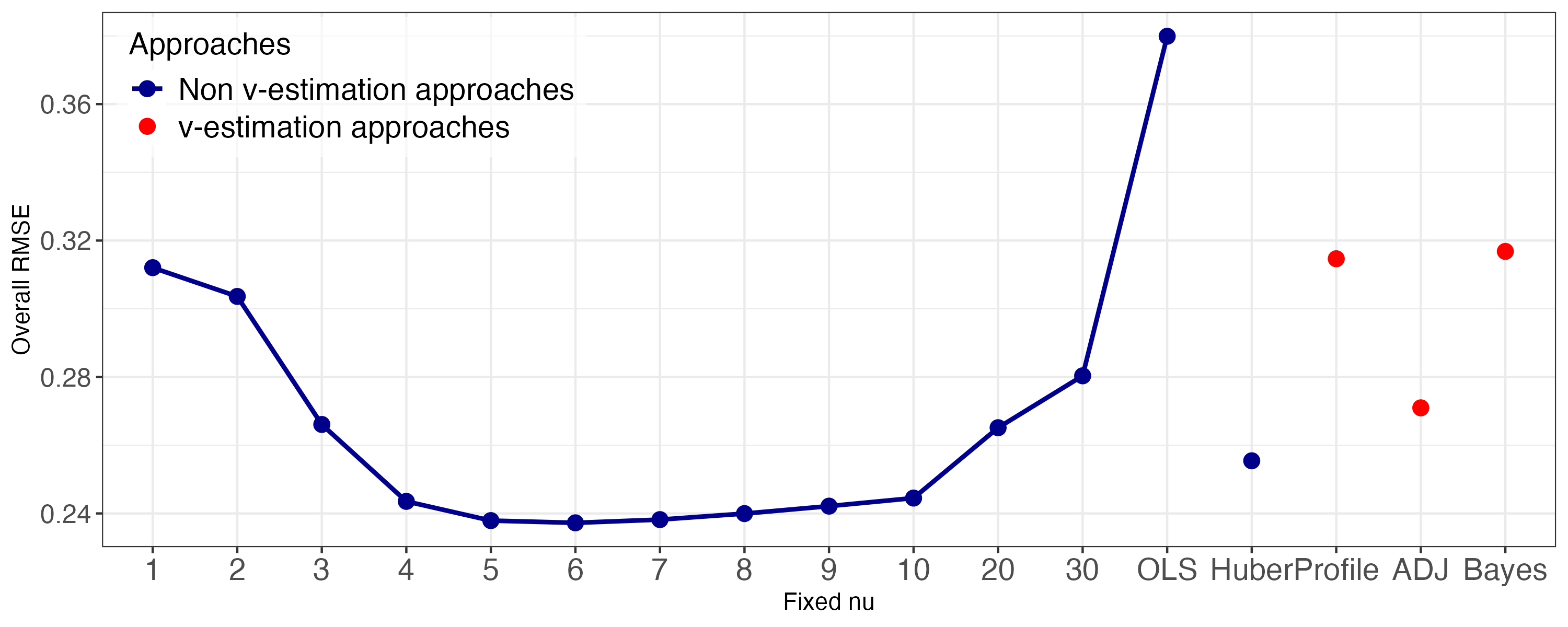}
        \caption{$n=100, p=50$}
        \label{fig:rmse_t_error_c}
    \end{subfigure}
    \caption*{The plots compare non $\nu$-estimation approaches (in blue) across various fixed values of $\nu$ and other traditional estimators (OLS and Huber) with the $\nu$-estimation approaches considered (in red) in terms of overall RMSE.}
    \label{fig:rmse_t_error}
\end{figure}

According to Figures~\ref{fig:rmse_t_error_a} and~\ref{fig:rmse_t_error_b}, the trend in the overall RMSE using the simulated data with $t$ error is similar to that of the stack loss data when $p/n<0.5$, where the RMSE tends to increase as the fixed $\nu$ shifts away from the true value. The $\hat\nu_{adj}$ approach results in a lower RMSE than Huber regression. According to Figure~\ref{fig:rmse_t_error_c}, when  $p/n = 0.5$, we observe a similar RMSE trend as the stackloss $n=210, p = 120$ simulations. The lowest RMSE is achieved at fixed $\nu = 6$, which is not close to the true $\nu = 2$. Huber regression performs better than the $\hat{\nu}_{adj}$ approach, although both approaches give relatively large RMSEs compared to the smallest RMSE given by fixing $\nu = 6$.

\begin{figure}[h!]
    \centering
    \caption{Overall RMSE of $\hat{\beta}$ using simulated normal-error data (data generation method 3)}
    \begin{subfigure}[t]{0.32\linewidth}
        \centering
        \includegraphics[width=\linewidth]{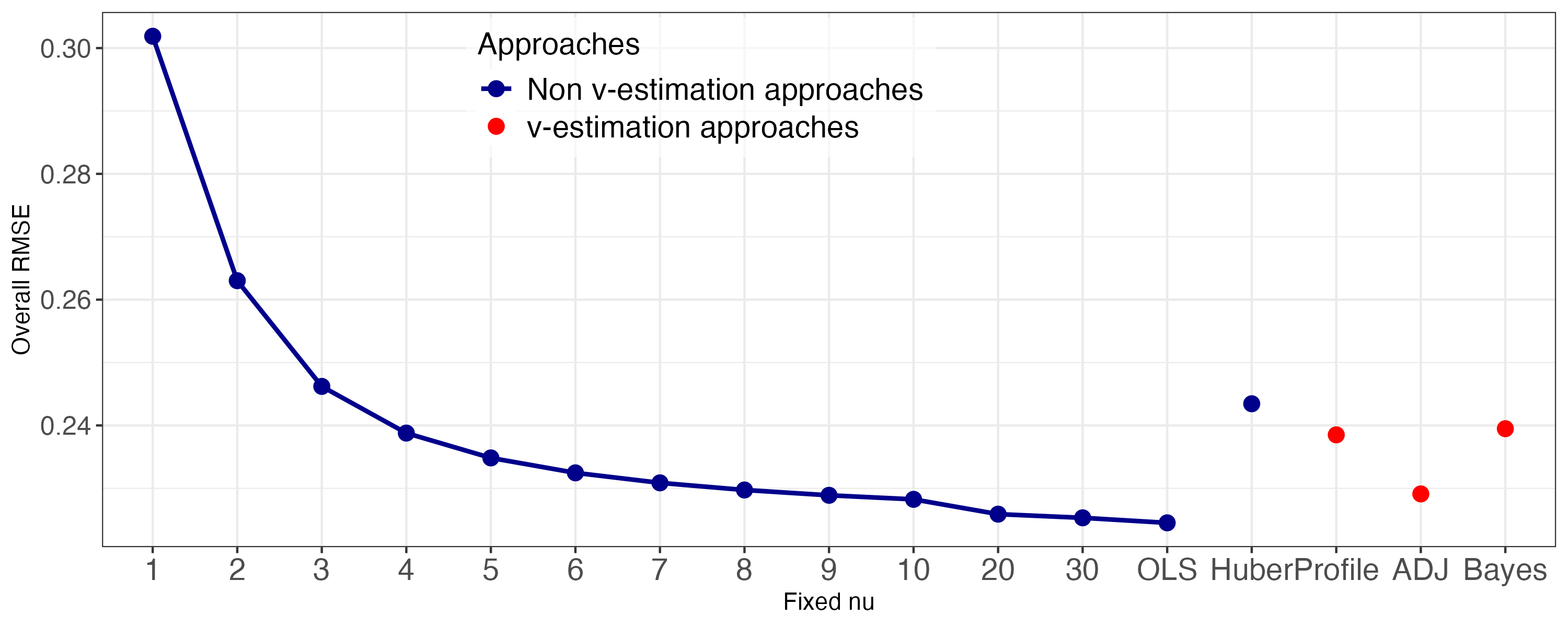}
        \caption{$n=20, p=4$}
        \label{fig:rmse_nerror_a}
    \end{subfigure}
    \hfill
    \begin{subfigure}[t]{0.32\linewidth}
        \centering
        \includegraphics[width=\linewidth]{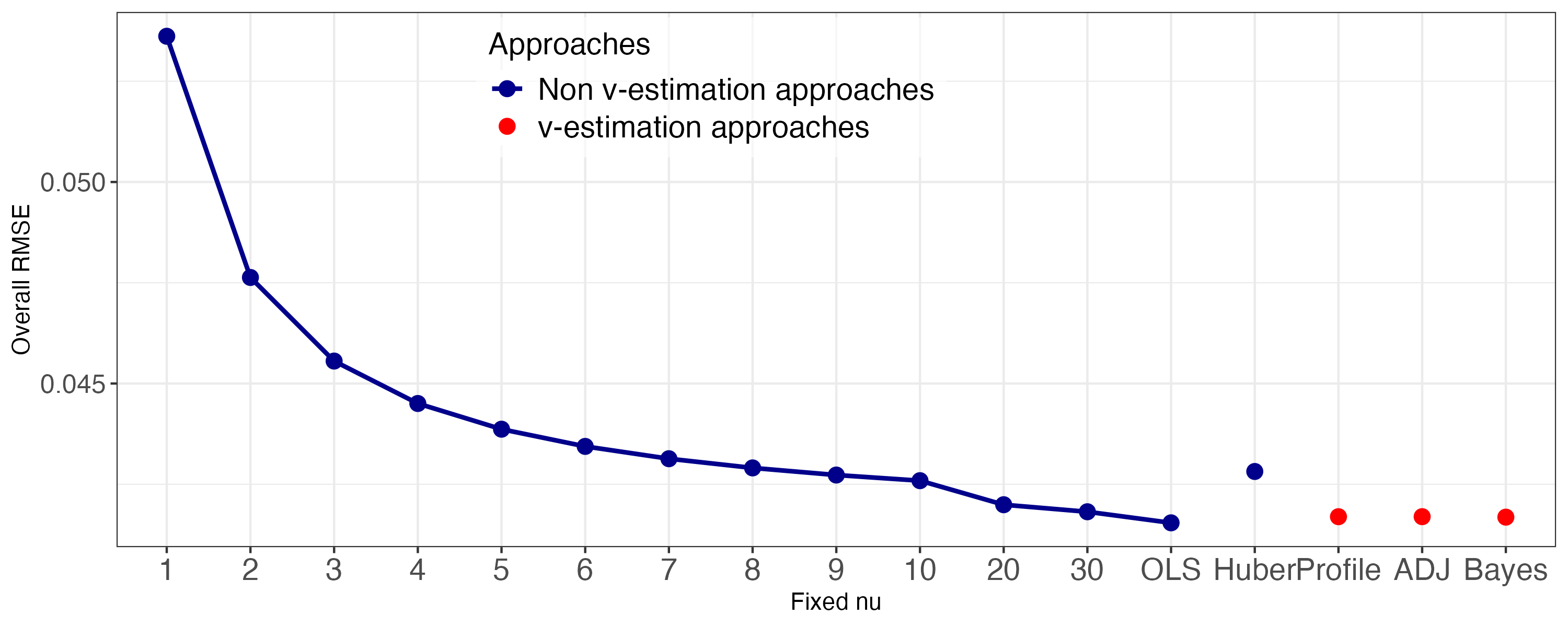}
        \caption{$n=500, p=4$}
        \label{fig:rmse_nerror_b}
    \end{subfigure}
    \hfill
    \begin{subfigure}[t]{0.32\linewidth}
        \centering
        \includegraphics[width=\linewidth]{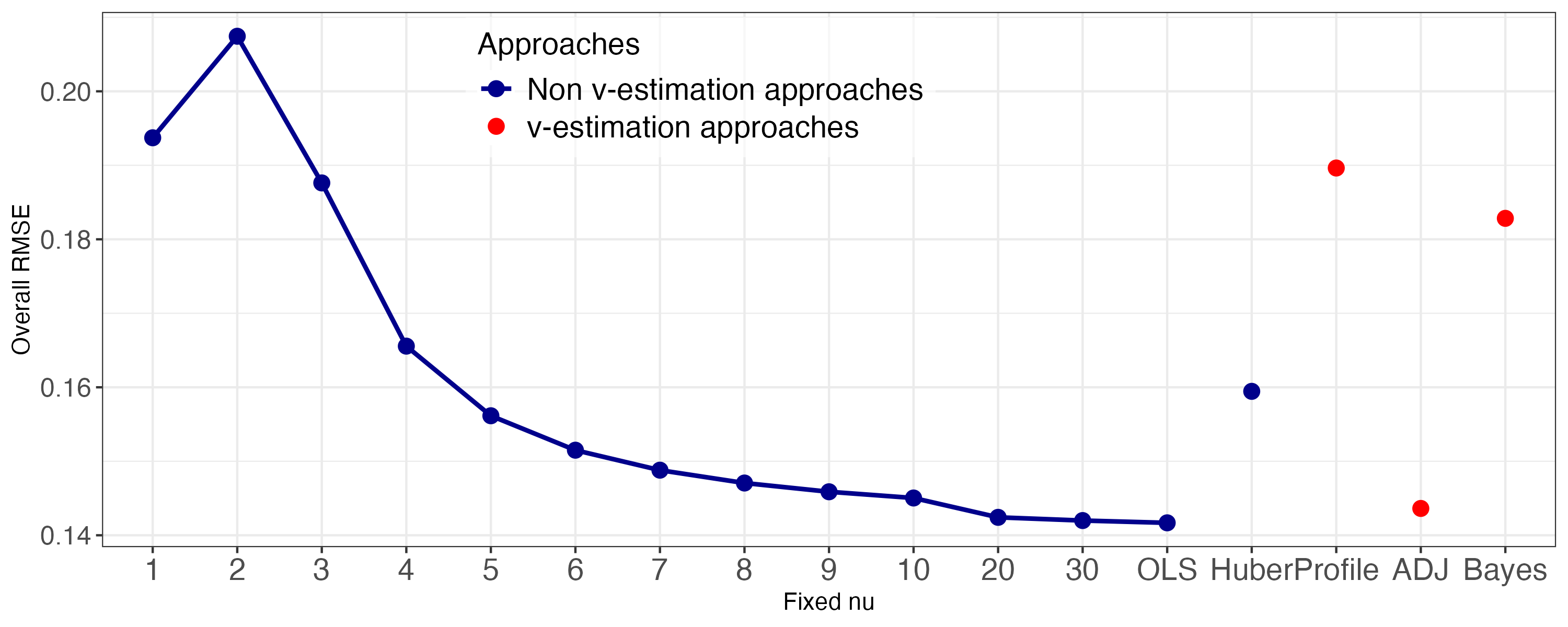}
        \caption{$n=100, p=50$}
        \label{fig:rmse_nerror_c}
    \end{subfigure}
    \caption*{The plot compares non $\nu$-estimation approaches (in blue) across various fixed values of $\nu$ and other traditional estimators (OLS and Huber) with the $\nu$-estimation approaches considered (in red) in terms of overall RMSE.}
    \label{fig:rmse_nerror_n20_p4_nu2}
\end{figure}

With standard normal errors (Figures~\ref{fig:rmse_nerror_a} and~\ref{fig:rmse_nerror_b}), OLS performs the best in both $n = 20$ and $n = 500$ cases. All $\nu$-estimation approaches outperform Huber regression, with the $\hat{\nu}_{adj}$ approach coming closest to OLS among the $\nu$-estimation approaches. In the high-dimensional case $n = 100; p = 50$, Figure~\ref{fig:rmse_nerror_c} shows that OLS again performs the best, and only $\hat{\nu}_{adj}$ surpasses Huber regression among the $\nu$-estimation approaches. Overall, estimating $\nu$ is beneficial when $p/n$ is not too large, as it would give promising results regardless of the underlying error distribution. For errors with light-tailed distributions, the adjusted profile likelihood approach is comparable to the OLS, and for heavy-tailed distributions, it outperforms Huber when the $p/n$ ratio is below 0.5. Although \citet{Lange1989} recommends fixing $\nu$ for small sample size $n$, the adjusted profile likelihood approach yields performance comparable to the best fixed-$\nu$ choices across error settings, supporting its use for estimating $\nu$.

\section{Robustness to outliers} \label{robust}

\subsection{Simulation Settings}

To investigate the performance of $t$ regression in the presence of outliers, we considered a linear model with true $\beta = (1, 5, 10)$ and $n = 100$. Additionally, we also considered data with outliers under a high-dimensional setting, generated using a linear model with $p = 80$, true $\beta = \boldsymbol{0}_p$, and $n = 300$. The covariates and errors were generated from the standard normal distribution.
To introduce outliers, the error distribution was contaminated with (i) $N(0, 9)$, (ii) $\chi^2(4) - 4$, (iii) $t(2)$, or (iv) two-point contamination in which any response value may become $-5$ or 5 with a probability $\lambda /2$ each. The contamination rates considered are $\lambda = $ 10\%, 20\%, and 30\%.

\subsection{Results when $p = 3$}

\begin{figure}[h!]
    \centering
    \caption{Overall RMSE of $\hat{\beta}$ with 10\%, 20\%, and 30\% $N(0,9)$ contaminated errors when $p = 3$}
    \label{fig:N(0,9)_10_20_30}
    \begin{tabular}{ccc}
\includegraphics[width=0.33\textwidth]{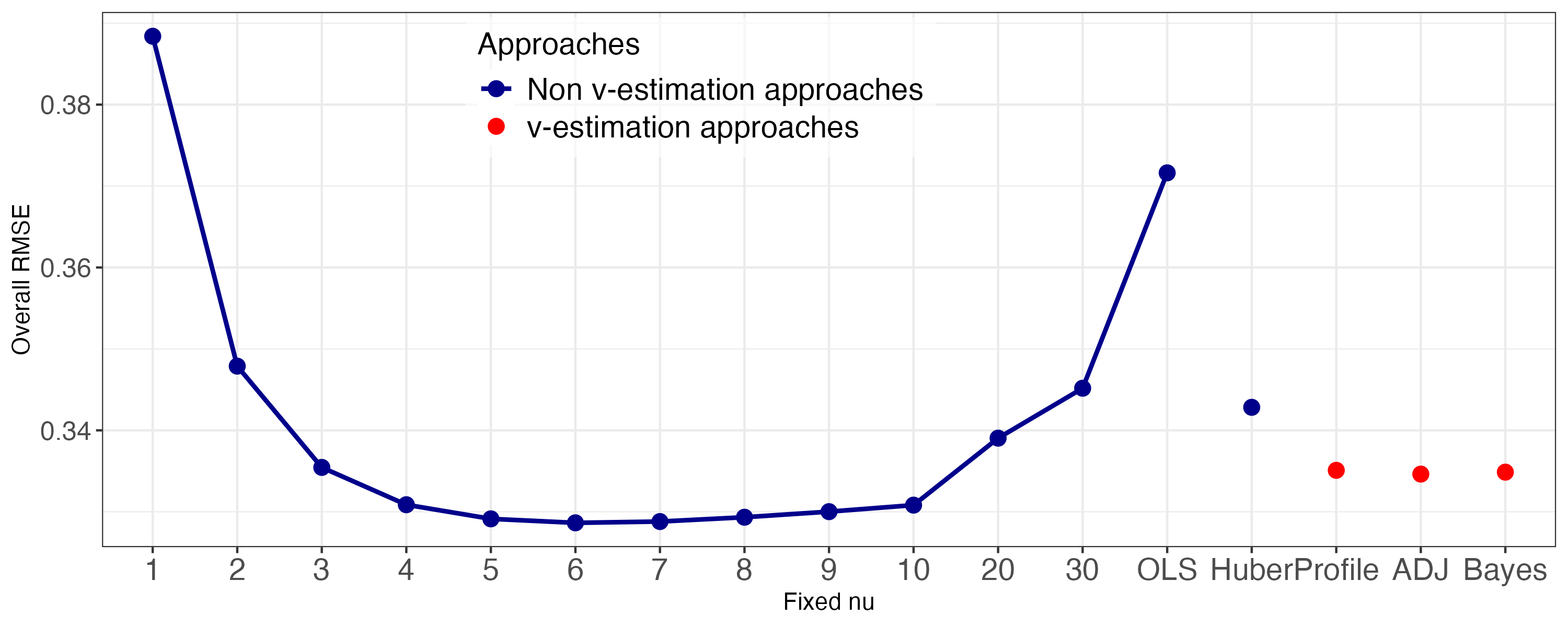}    & \includegraphics[width=0.33\textwidth]{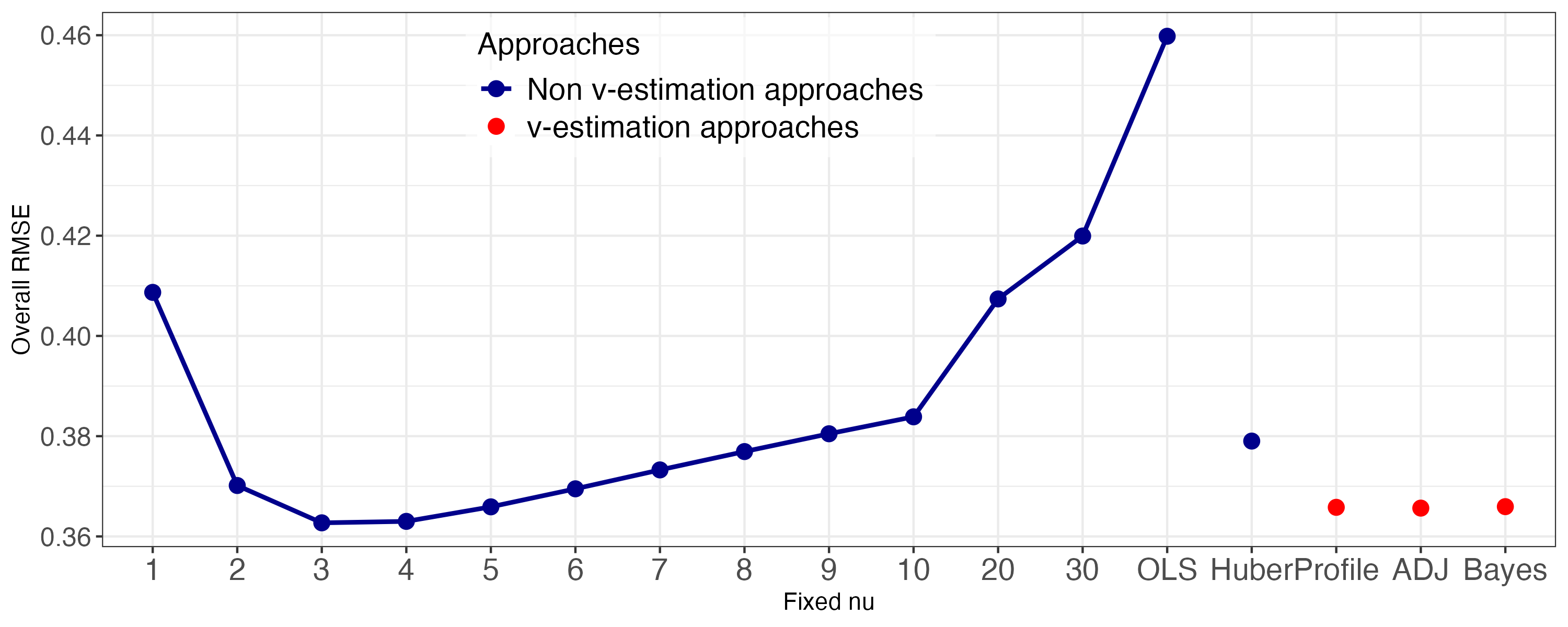} & \includegraphics[width=0.33\textwidth]{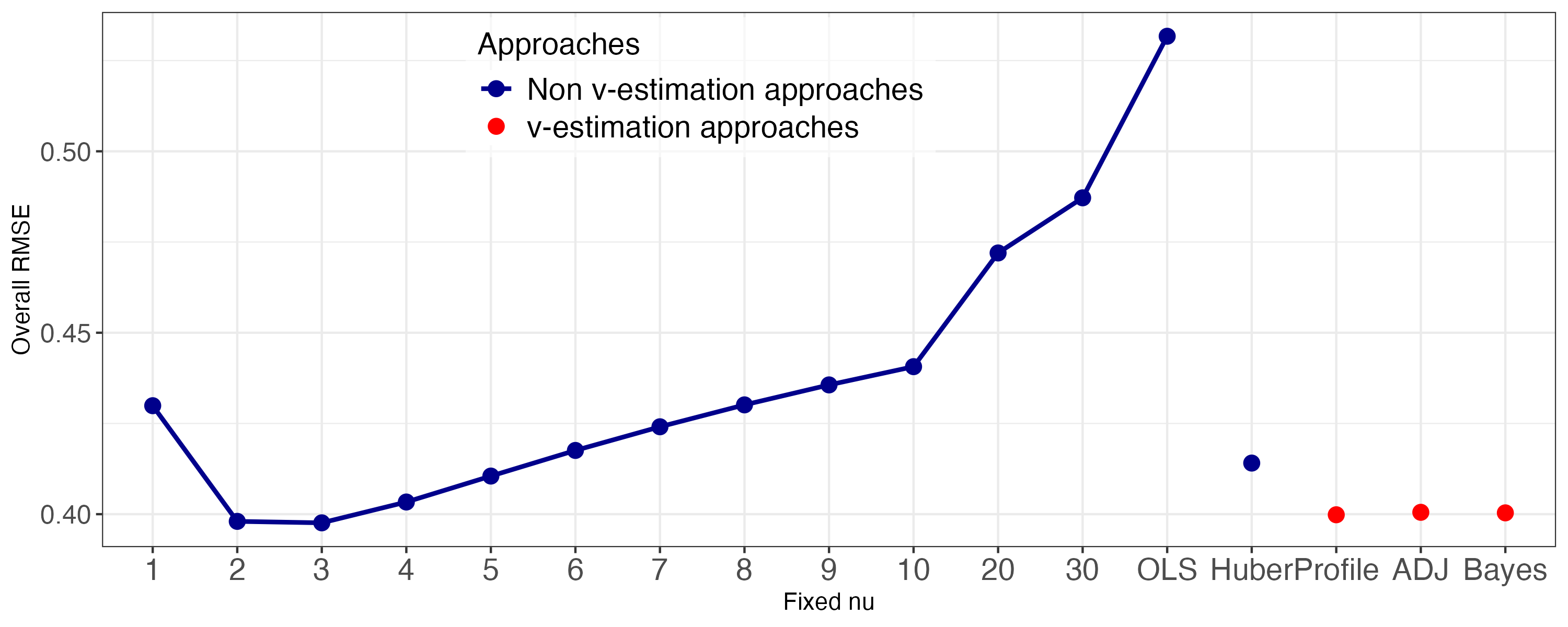}
    \end{tabular}
\end{figure}

When the data is contaminated with $N(0,9)$ errors (see Figure
\ref{fig:N(0,9)_10_20_30})
, although the $\nu$-estimation approaches do not necessarily give the lowest overall RMSE, they are able to give relatively low RMSE compared to OLS and Huber regression. As the contamination rate increases, the $\nu$-estimation approaches outperform most of the fixed $\nu$ approach estimations as they are successful in estimating $\nu$ close to the truth. A similar pattern is shown in the data with two-point contamination.

\begin{figure}[h!]
    \centering
    \caption{Overall RMSE of $\hat{\beta}$ with 10\% 20\%, and 30\% $\chi^2(4) - 4$ contaminated errors when $p = 3$}
    \label{fig:chi4_10_20_30}
    \begin{tabular}{ccc}
    \includegraphics[width = 0.33\textwidth]{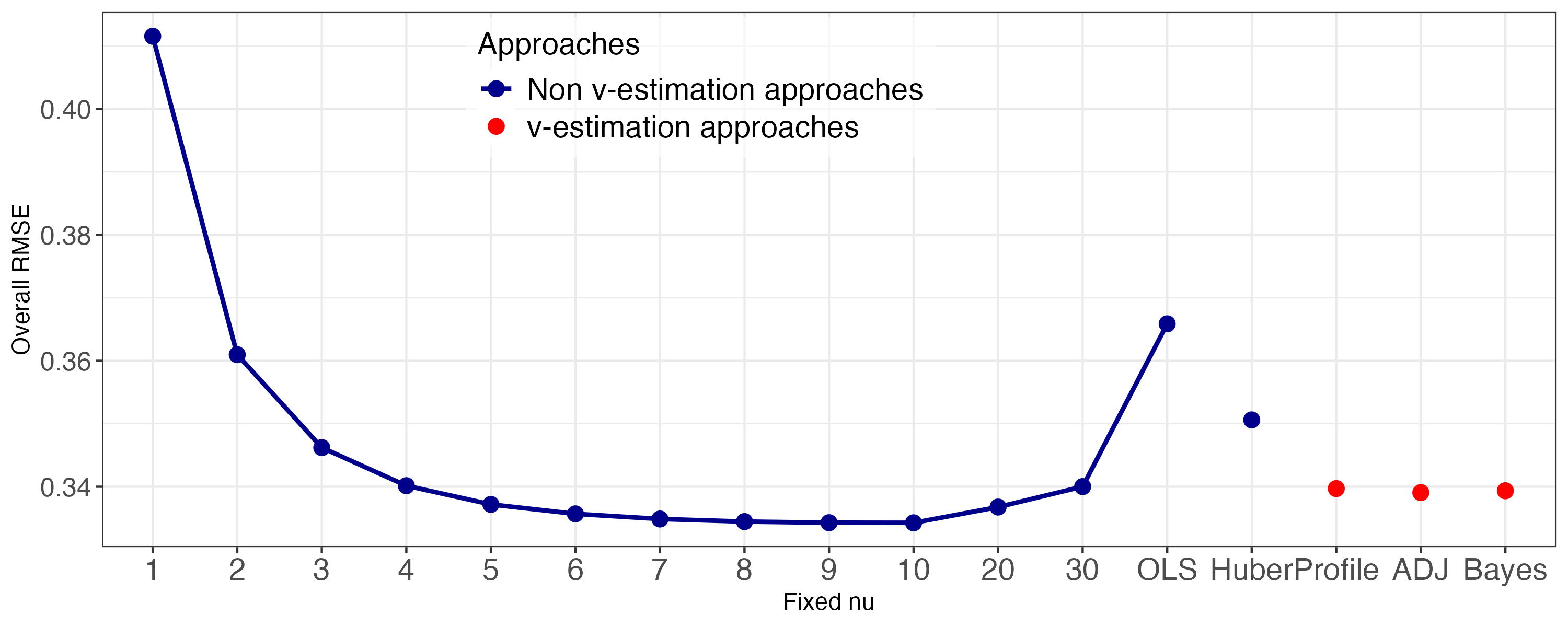}  &        \includegraphics[width = 0.33\textwidth]{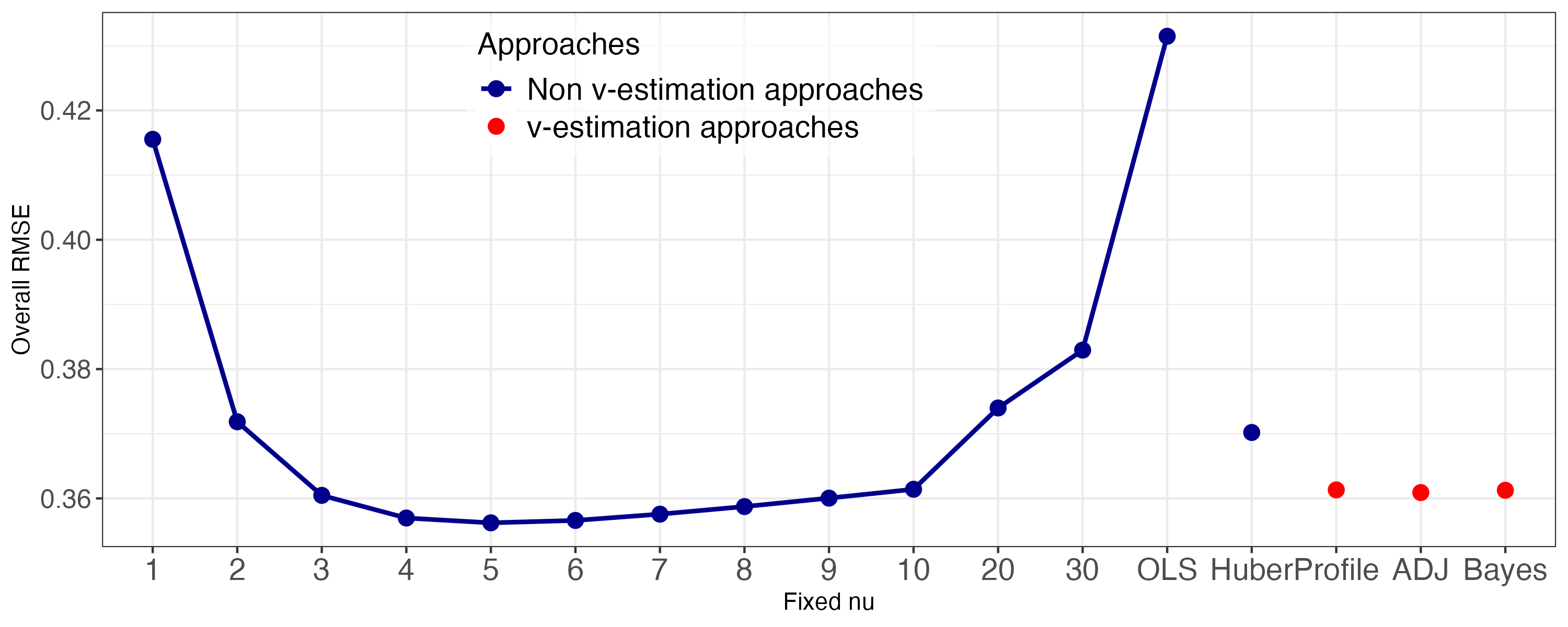}  
         &        \includegraphics[width = 0.33\textwidth]{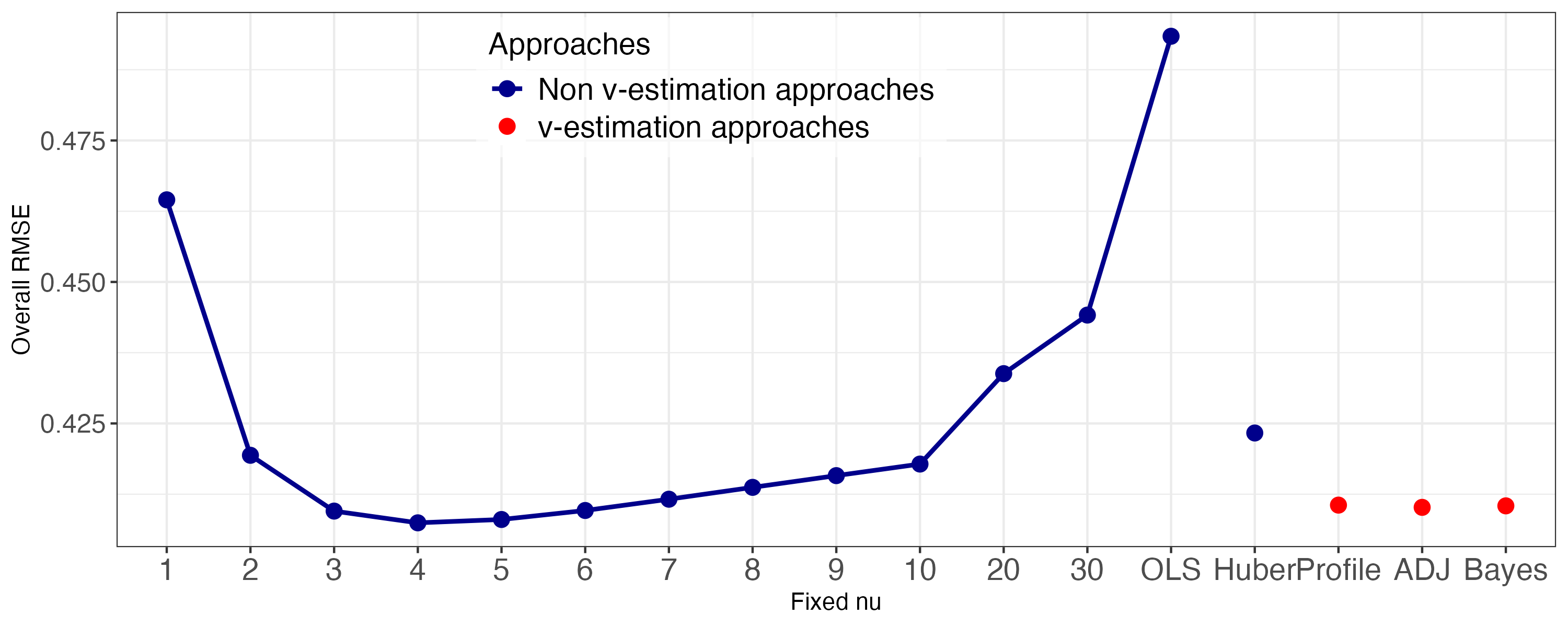} 
    \end{tabular}
\end{figure}

In cases where data is contaminated with $\chi^2(4) - 4$ (see Figure~\ref{fig:chi4_10_20_30}) or $t(2)$ errors, the $\nu$-estimation approaches outperform OLS and Huber regression. Regardless of the contamination rates, a moderate range of fixed $\nu$ values gives slightly lower overall RMSE than all $\nu$-estimation approaches.

\subsection{Results when $p = 80$}

In the high-dimension setting where the data is contaminated with $N(0,9)$ (see Figure~\ref{fig:N(0,9)_10_20_30_high_dim}), $\chi^2(4) - 4$, or $t(2)$ errors, although $\hat{\nu}_{adj}$ approach does not necessarily give the lowest overall RMSE, it is able to give relatively lower RMSE compared to OLS and Huber regression. This may be explained by the fact that $\hat \nu_{adj}$ are closer to the truth compared to other $\nu$-estimation methods (and fixing large $\nu$) in this high-dimension setting. 

\begin{figure}[h!]
    \centering
    \caption{Overall RMSE of $\hat{\beta}$ with 10\%, 20\%, and 30\% $N(0,9)$ contaminated errors when $p = 80$}
    \label{fig:N(0,9)_10_20_30_high_dim}
    \begin{tabular}{ccc}
\includegraphics[width=0.33\textwidth]{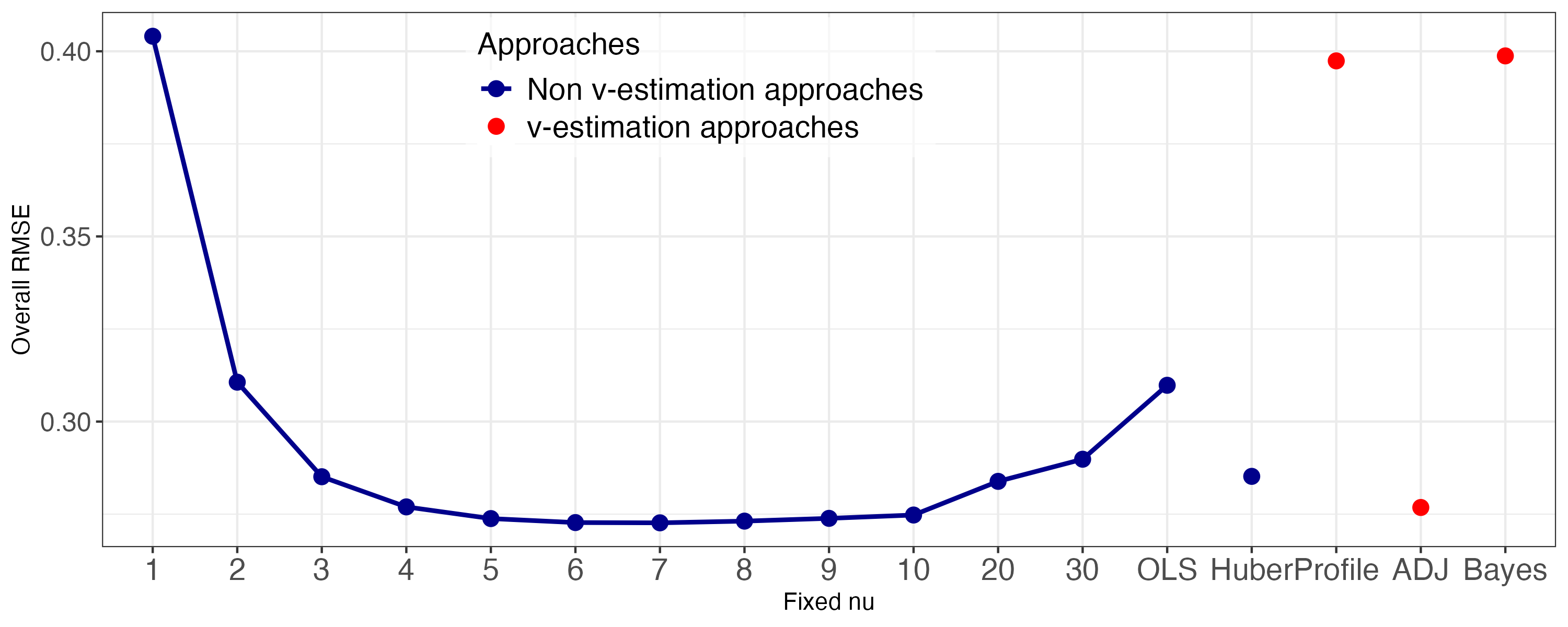}    & \includegraphics[width=0.33\textwidth]{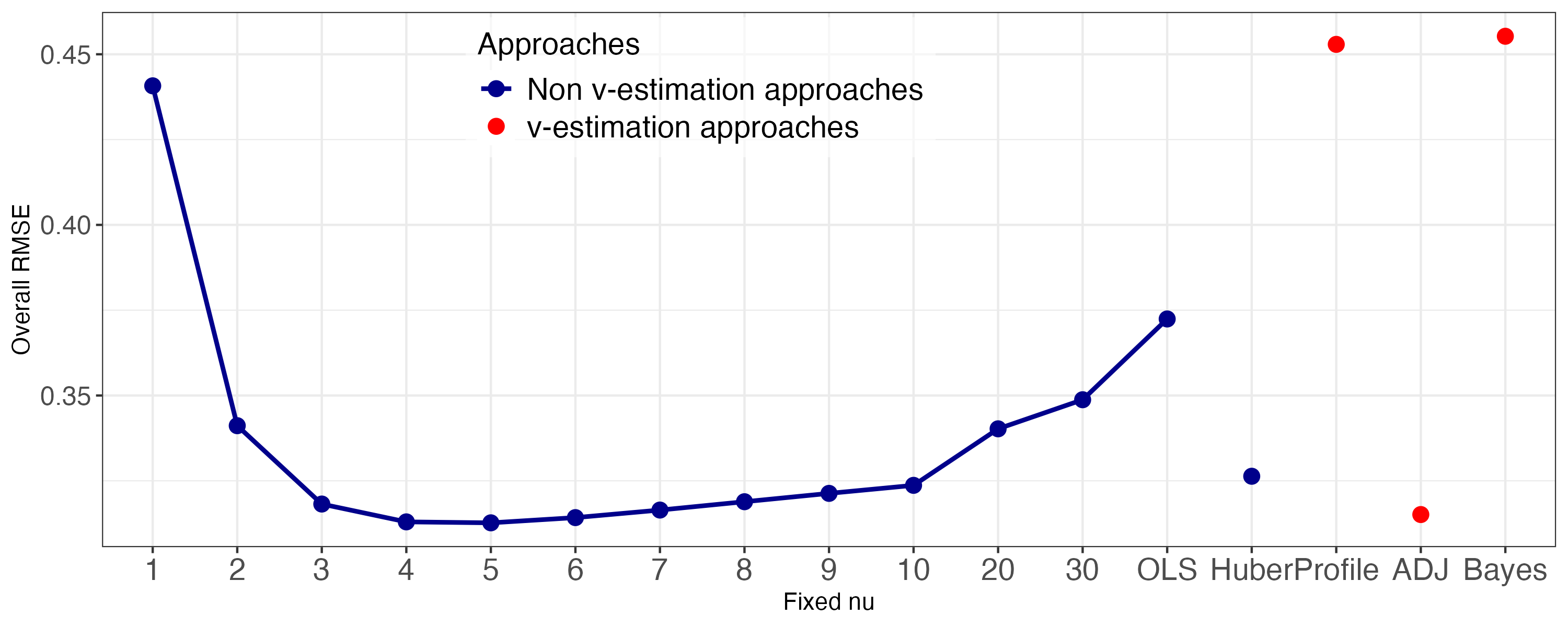} & \includegraphics[width=0.33\textwidth]{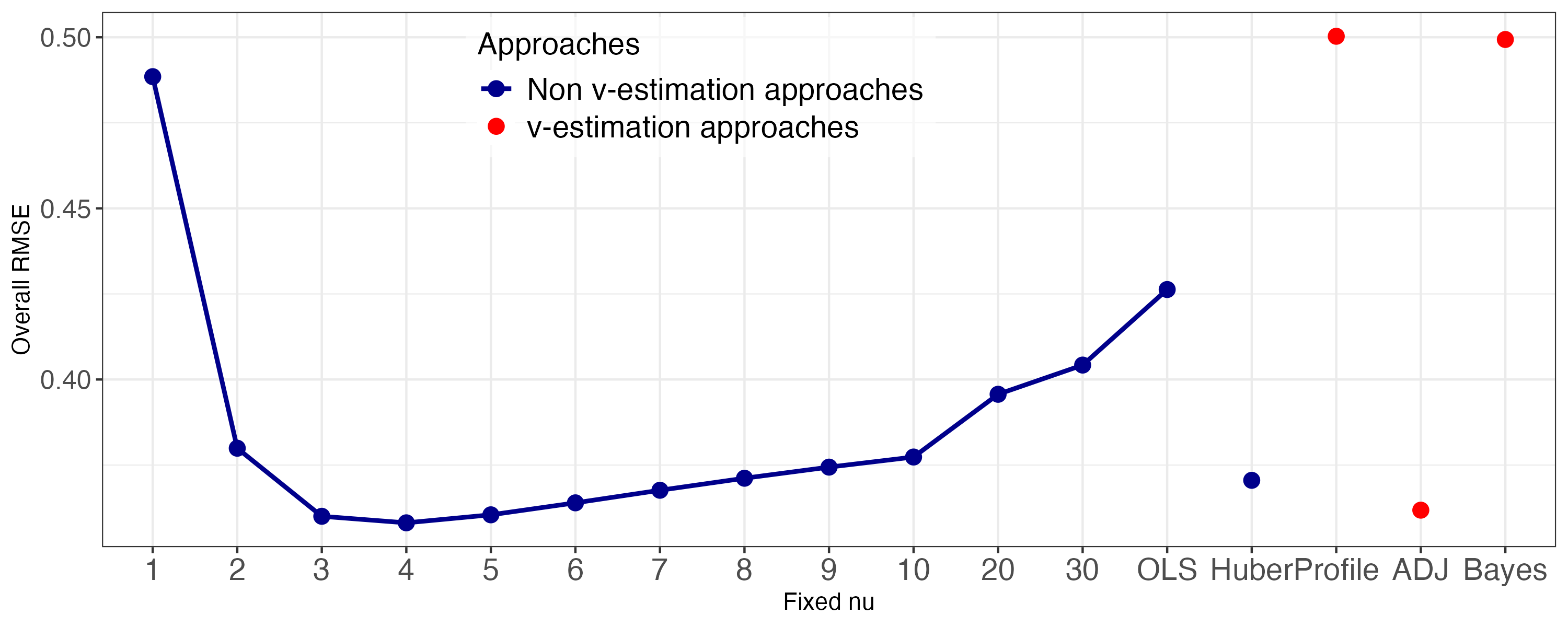}
    \end{tabular}
\end{figure}

\begin{figure}[h!]
    \centering
    \caption{Overall RMSE of $\hat{\beta}$ with 10\%, 20\%, and 30\% two-point contaminated errors when $p = 80$}
    \label{fig:2pt_10_20_30_high_dim}
    \begin{tabular}{ccc}
\includegraphics[width = 0.33\textwidth]{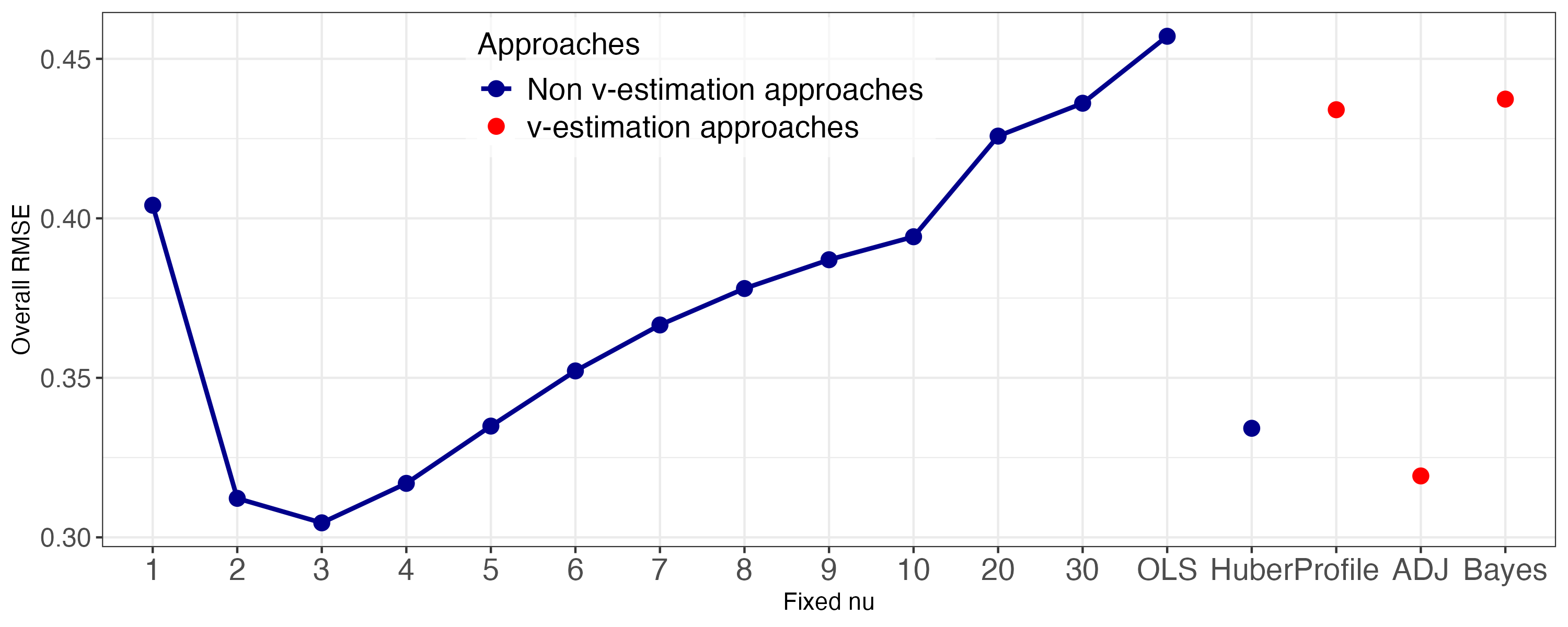}   &     \includegraphics[width = 0.33\textwidth]{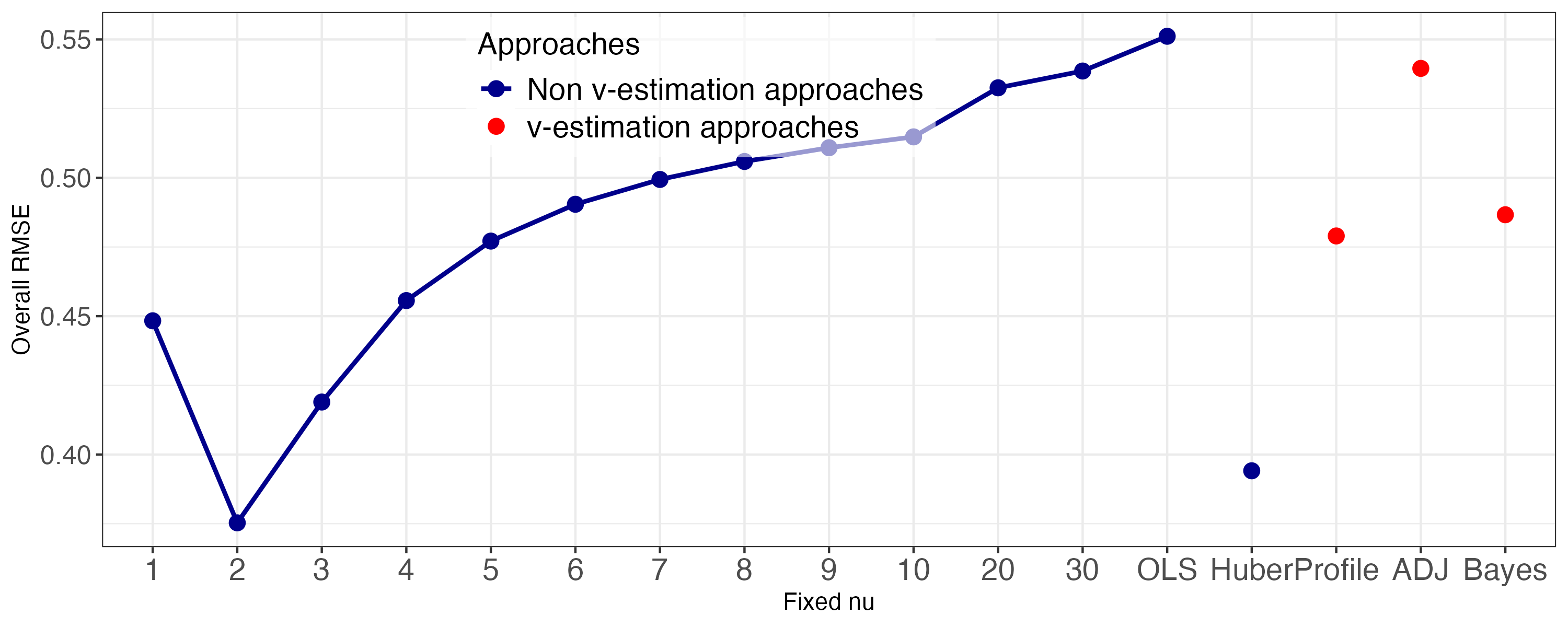} 
         &     \includegraphics[width = 0.33\textwidth]{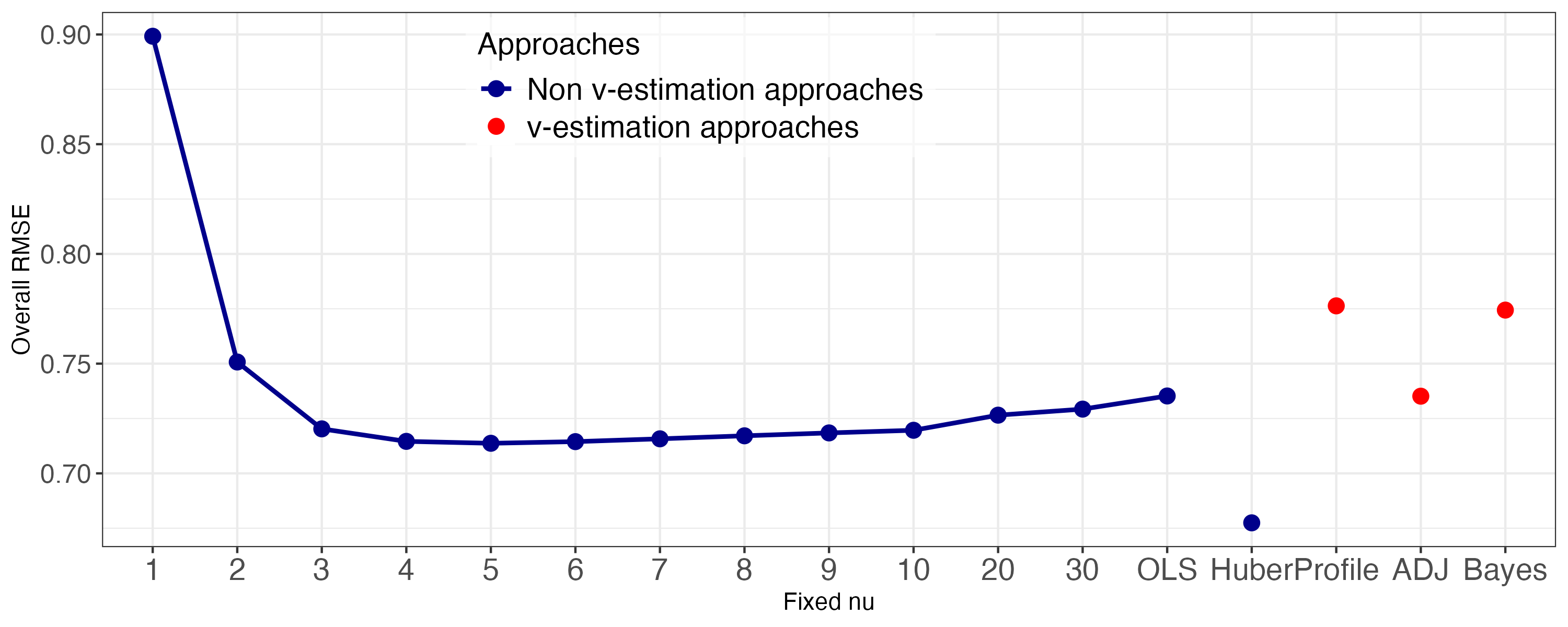} 
    \end{tabular}
\end{figure}

For data contaminated with two-point errors (see Figure~\ref{fig:2pt_10_20_30_high_dim}), the $\hat{\nu}_{adj}$ approach outperforms Huber and OLS when the contamination rate is low (e.g. 10\%). As the contamination rate increases, Huber regression performs better as $\hat \nu_{adj}$ is overestimated, and hence the corresponding $\beta$ estimation performance is similar to fixing $\nu = 30$. Nonetheless, the $\nu$-estimation approaches outperform OLS when the contamination rate is 20\%. As the contamination rate increases to 30\%, all $\nu$-estimation approaches perform as badly as most fixed-$\nu$ approaches and OLS. This may be caused by high contamination with outliers that do not follow the heavy-tailed pattern, which the $t$ distribution fails to capture.

These results collectively suggest that the $\nu$-estimation approaches, and $\hat \nu_{adj}$ in some high-dimension settings, offer a robust performance across various contamination types and levels, often outperforming classical regression methods.

\section{Conclusions and Discussion}

In this paper, we studied robust regression with Student's $t$ error and addressed the commonly asked question on how to choose the degrees of freedom parameter in Student's $t$ distribution in practice. We compared four frequentist and Bayesian methods in estimating the degrees of freedom $\nu$, and studied how these choices propagate to the estimation of the regression coefficient $\beta$ in uncontaminated and contaminated data. In terms of $\beta$ estimation performance, we suggest estimating $\nu$ with adjusted profile likelihood when the $p/n$ ratio is not too large, as it is the most robust in both heavy-tailed and light-tailed distributions, as well as in contaminated data and large $p$ (given moderate $n$) settings. However, to ensure stable $\nu$ and $\beta$ estimations, our proposed $\nu$-estimation method requires $p$ to be not too large relative to $n$. This is because high dimensionality may lead to larger residuals magnitude and tend to overestimate $\nu$. Therefore, when the $p/n$ ratio is very large, optimizing the likelihood with $\nu$ fixed at a large value may be more efficient. We leave a more precise characterization of this threshold for future work.


\newpage

\bibliographystyle{apalike}
\nocite{*}
\bibliography{references}

\newpage
\appendix
\section{Derivation of observed Fisher information for $\beta$ and $\sigma$}

\begin{align*}
\ell(\beta, \sigma, \nu; y, x) &= c(\nu) - \frac{\nu+1}{2} \sum_{i = 1}^n \log\{\nu + (\frac{y_i - x_i^T\beta}{\sigma})^2 \} - n\log(\sigma)
\end{align*}
The first derivative with respect to $\beta$ and $\sigma$ are:
\begin{align*}
\frac{\partial \ell}{\partial \beta} 
&= (\nu + 1) \sum_{i = 1}^n x_i^T\frac{y_i - x_i^T\beta}{\nu\sigma^2 + (y_i - x_i^T\beta)^2} \quad \quad
\frac{\partial l}{\partial \sigma}  = -\frac{n}{\sigma} + \frac{\nu + 1}{\sigma} \sum_{i = 1}^n \frac{(y_i - x_i^T\beta)^2}{\nu\sigma^2 + (y_i - x_i^T\beta)^2}.
\end{align*}


 Define $r_i := y_i - x_i^T\beta$, then $z_i := r_i/\sigma$, and also define
 \begin{align*}
     g_i(\sigma):= \frac{r_i^2}{\nu\sigma^2 + r_i^2} \quad \quad g_i'(\sigma) 
 = \frac{-2\sigma\nu r_i^2}{(\sigma^2\nu + r_i^2)^2}.
 \end{align*}

The observed Fisher information matrix entries for $\theta = (\beta, \sigma)$ is 
\begin{align*}
    \frac{\partial ^2 \ell}{\partial \sigma^2} &= \frac{n}{\sigma^2} - (\nu + 1)\sum_{i = 1}^n[\frac{z_i^2}{\sigma^2\nu + r_i^2} + \frac{2\nu r_i^2}{(\sigma^2\nu + r_i^2)^2}] \quad 
    \frac{\partial ^2 \ell}{\partial \beta^2} =  \frac{\nu+1}{\sigma^2}\sum_{i = 1}^n \frac{-x_ix_i^T}{\nu + z_i^2}+ \frac{2z_i^2x_i^T}{(\nu + z_i^2)^2} \\
    \frac{\partial ^2 \ell}{\partial \beta \partial \sigma} &=  -\frac{(\nu + 1)2\nu}{\sigma^2}\sum_{i = 1}^n \frac{x_i^Tz_i}{(\nu + z_i^2)^2} 
\end{align*}

\section{Proof of Invariance Property}
Let $y_i \in \mathbb{R}$, and $x_i \in \mathbb{R}^p$, $a > 0$, $b$ be a vector with identical entries, $n$ be the sample size, and $p$ be the dimension of the covariates. Let $y_i = x_i^T\beta + \epsilon_i$, then $y'_i = a y_i + x_i^T b = x_i^T (a \beta + b) + a \epsilon_i.$ 
For a fixed $\nu$, let $(\hat{\beta_\nu}, \hat{\sigma_\nu})$ be the constrained MLEs from data $y_i, i \in [n]$. By the invariance property of MLE, the constrained MLEs from data $y'_i$ are $(\hat{\beta'_\nu}, \hat{\sigma'_\nu})$ where 
$
\hat{\beta'_\nu} = a \hat{\beta_\nu} + b$ and $
\hat{\sigma'_\nu} = a \hat{\sigma_\nu}.
$
let $ z_i := (y_i - x_i^T\hat{\beta_\nu})/\hat{\sigma_\nu}, z'_i := (y'_i - x_i^T\hat{\beta'_\nu})/\hat{\sigma'_\nu}.$ By rearrangment, it is obvious that $z_i = z'_i$.


\subsection{Profile likelihood}
The profile log-likelihood function is 
\begin{align*}
\ell^P(\nu; y, x) &= C(\nu) - \frac{\nu+1}{2} \sum_{i = 1}^n \log\{\nu + z_i^2 \} -n\log(\hat{\sigma}_\nu).
\end{align*}
Then, for a new response $y'$,
\begin{align*}
l^P(\nu; y', x) 
&= C(\nu) - \frac{\nu+1}{2} \sum_{i = 1}^n \log\{\nu + z_i^2 \} -n\log(a \hat{\sigma}_\nu) 
= l^P(\nu; y, x) - n\log(a),
\end{align*}
so the log profile likelihood function is shifted by $- n\log(a)$.

\subsection{Adjusted Profile likelihood} 
\label{sec: adj-eqt}
By the Schur complement formula,
\begin{align*}
|j_{\lambda\lambda}(\nu, \hat{\lambda_\nu})| 
&= |j_{11}| \cdot (j_{22} - j_{21}j_{11}^{-1}j_{12}).
\end{align*}
The submatrix 
$j_{11}$ is a $p \times p$ matrix:
\begin{align*}
j_{11} &= (\nu + 1) \sum_{i = 1}^n x_i x_i^T \frac{\nu \sigma^2 - r_i^2}{(\nu \sigma^2 + r_i^2)^2} 
= \frac{1}{\sigma^2} \times M.
\end{align*}
The $p \times p$ matrix $M$ is independent of $\sigma$. Therefore, $|j_{11}| = \frac{1}{\sigma^{2p}}|M| \propto \frac{1}{\sigma^2}$.
The component $j_{12}$ is a $p \times 1$ vector:
$$ j_{12} =  \frac{(\nu + 1)2\nu}{\sigma^2}\sum_{i = 1}^n \frac{x_i^T z_i}{(\nu + z_i^2)^2} \propto \frac{1}{\sigma^2} \text{ and } j_{22} = -\frac{n}{\sigma^2} + \frac{\nu + 1}{\sigma^2}\sum_{i=1}^n \frac{z_i^2}{\nu + z_i^2} + \frac{2\nu z_i^2}{(\nu + z_i^2)^2}  \propto \frac{1}{\sigma^2}.$$

Therefore, we have
\begin{align*}
|j_{\lambda\lambda}(\nu, \hat{\lambda}_\nu)| 
\propto \sigma^{-2p -2}  \quad \text{ and } \quad -\frac{1}{2}\log|j_{\lambda\lambda}(\nu, \hat{\lambda_\nu})| 
= K + (p+1)\log(\sigma),
\end{align*}
where $K$ is independent of $\sigma$.
This shows that the profile log-likelihood function is shifted by $- n\log(a) +(p+1)\log(a)$ under regression-scale transformation.

\subsection{Full Independence Jeffreys}

For Bayesian methods, the posterior is proportional to its joint density, which consists of a likelihood part and the prior part. For log-likelihood, similar to profile log-likelihood, it is shifted by $-n\log(a)$. Meanwhile, the log $\sigma$ prior would be shifted by $-\log(a)$. So, the log Full Jeffrey's joint density would be shifted by $-(n+1)\log(a)$.

\subsection{Marginal Jeffrey's}

For both Marginal Jeffrey's methods (Independence and $\nu$ block), since we do not impose priors on $\sigma$, only the profile likelihood part consists of $\sigma$. Therefore, they would both be shifted the same as the profile log-likelihood.

\section{More details on $\nu$-optimization}

\subsection{Out of bound values} 
Using the $\omega = 1/ \nu$ parametrization, for any $\omega<0$, we set the functions to return infinity to avoid returning negative $\hat \nu$; when $\omega=0$, we treated the error distribution as normal when deriving the likelihood part of the objective functions. Finally, we inverted $\hat{\omega}$ to obtain $\hat{\nu}$. 

\subsection{Choice of sample sizes and starting values} 

\cite{Fonseca2008} noted that, as a function of $\nu$, the profile log-likelihood function will increase without bound when $\sum_{i=1}^n(\hat{z_i}^2 -1)^2 < 2n$ where $\hat{z_i} = (y_i - x_i^T\hat{\beta)}/\hat{\sigma}$, $\hat{\beta}, \hat{\sigma}$ are the OLS estimators. Hence, to ensure that the optimization algorithm would converge, we first searched for a large enough $n$ such that \cite{Fonseca2008}'s condition is violated. Next, we searched for a sample size $n$ such that all 5 objective functions do not have flatness issues by manually plotting out the functions for 20 simulated datasets. When we ran simulations using smaller $n$, for instance,  $\nu = 2$; $p=40$ using $n=100$, around 3.33\% of the simulations had flatness issues, and over 20\% $\hat \nu_{adj}$ from the remaining successful converged simulations were greater than $10$, which are bad estimates. When $p$ was further increased to $60$, the flatness issues appeared in 55\% of the simulations, and over 18\% $\hat \nu_{adj}$ from the remaining successfully converged simulations were greater than $10$. 

\begin{figure}[h!]
    \centering
    \caption{Boxplots $p=40$ and $p=60$ with $n=100$, $\nu = 2$.}
    \begin{subfigure}[t]{0.48\textwidth}
        \centering
        \includegraphics[width=\linewidth]{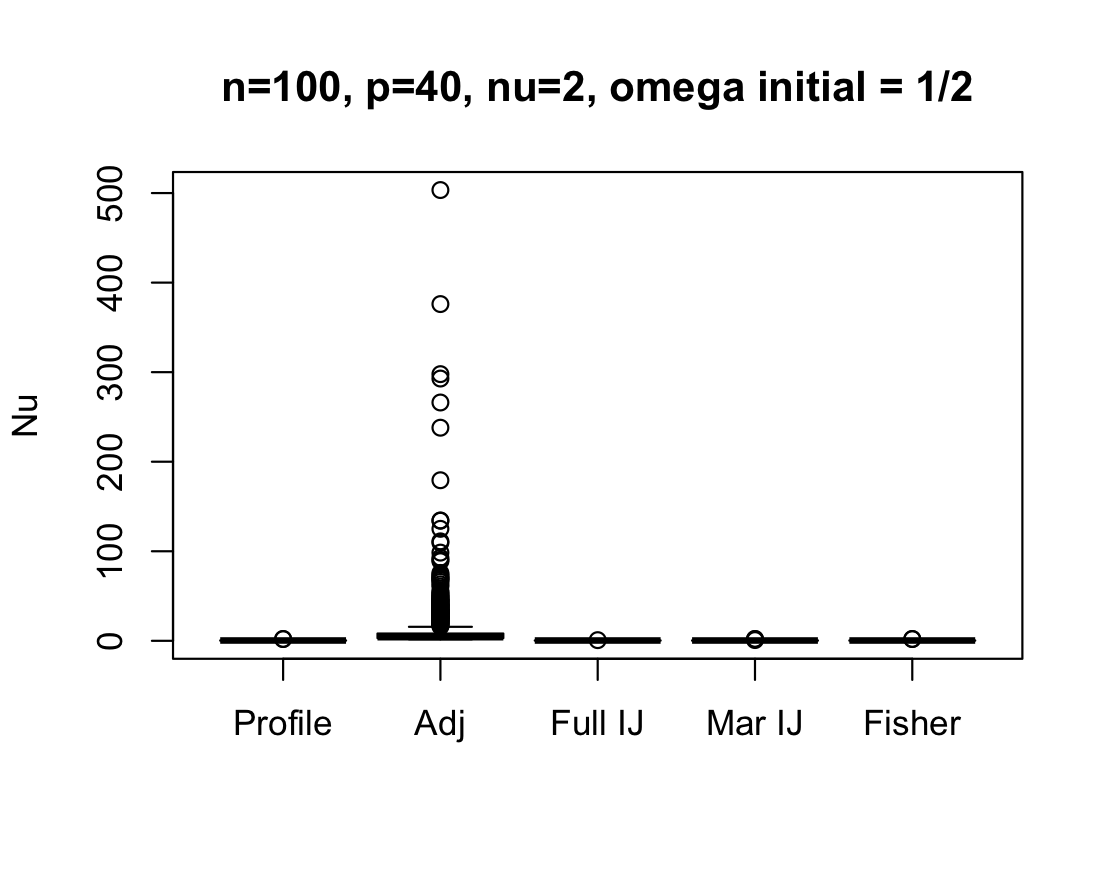}
        \caption{(a) $p=40$}
    \end{subfigure}
    \hfill
    \begin{subfigure}[t]{0.48\textwidth}
        \centering
        \includegraphics[width=\linewidth]{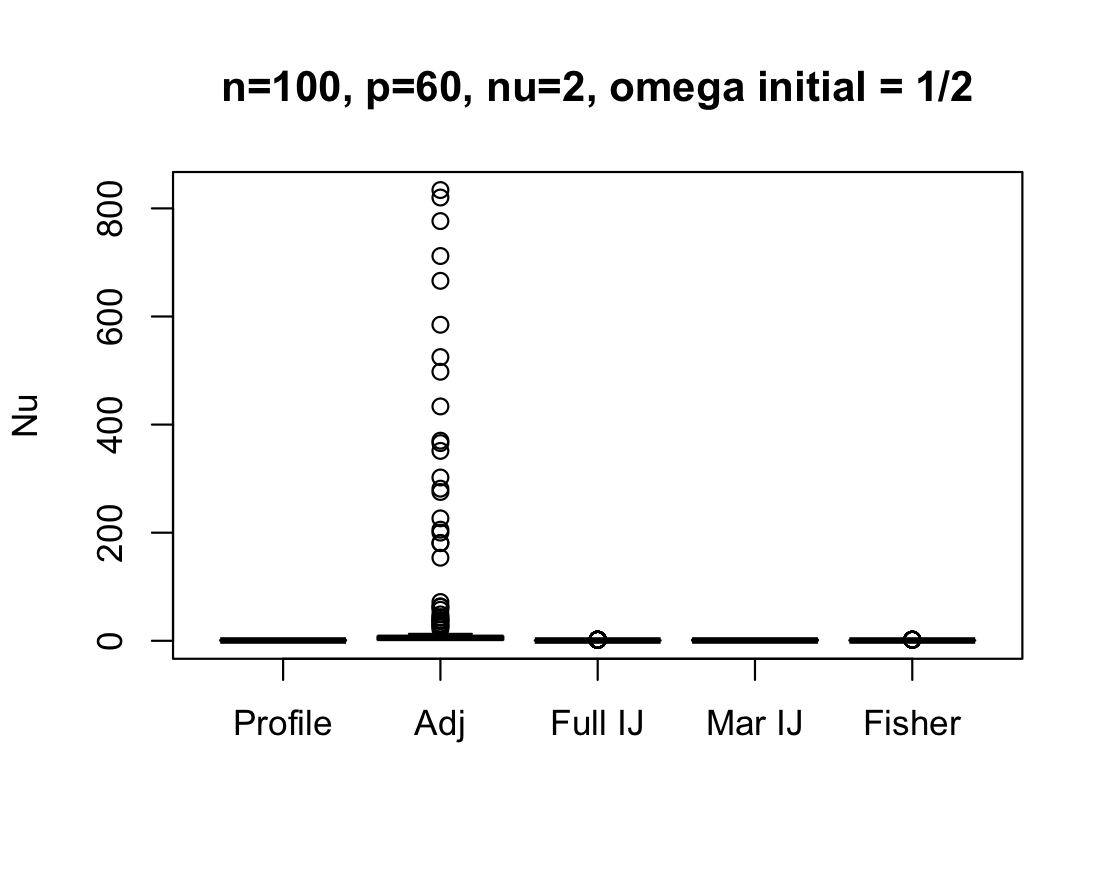}
        \caption{(b) $p=60$}
    \end{subfigure}
    \caption*{These boxplots displays the spread of $\hat \nu$ in successfully converged simulations under different approaches in (a) $p = 40$ and (b) $p=60$ with $n=100$, $\nu=2$ and initial value set at truth settings.}
    \label{fig:n100_boxplots}
\end{figure}

Note that the choice of initial guess $\omega_{\text{init}}$ only determines if the optimization algorithm would run or not. In cases of flatness issues, for some initial value, even though the algorithm may still run and claim to converge successfully, their estimated values are not necessarily reliable (i.e. the very overestimated $\hat \nu_{adj}$ shown above). Hence, in our simulation analysis, we try to avoid these cases, by setting sufficiently large $n$. 

When there are no flatness issues (i.e., the simulation settings we consider), setting the initial value at true $\omega$ that was used to generate the data optimizes the optimization time since we expect the returned $\hat \omega$ to be close to the truth. In our simulation settings, using a non-truth value as an initial guess would still yield stable estimates and RMSE results under the adjusted likelihood approach. The other 4 approaches tend to underestimate $\nu$, such phenomenon becomes more obvious as we set $\omega_{\text{init}}$ to be lower than the truth. For the purpose of comparing the performances between the 5 approaches, the overall RMSE conclusions were consistent with results obtained from setting the initial guess at truth, in which we concluded that the adjusted likelihood approach resulted in the lowest RMSE. For example, when we ran simulations on $n=300$; $\nu=2$; $p=60$ using the $1/5$ as initial guess of $\omega$, RMSE of $\hat \nu_{adj}$ is $0.5824363$, which is very close to that from setting initial guess at $1/2$ (i.e., truth). The other approaches yield RMSE at around $1.36-1.39$, which are slightly inflated from setting the initial guess at the truth, which gave $1.25-1.29$ RMSE. 
If the initial guess of $\omega$ is set at $1/10$, RMSE of $\hat \nu_{adj}$ is $0.5860144$, and the other approaches yield RMSE at around $1.44-1.49$. The boxplots below display the spread of $\hat \nu$ in each setting.

\begin{figure}[h!]
    \centering
    \caption{Boxplots of $\hat \nu$ under different $\omega_{\text{init}}$ values}
    \begin{subfigure}[t]{0.32\textwidth}
        \centering
        \includegraphics[width=\linewidth]{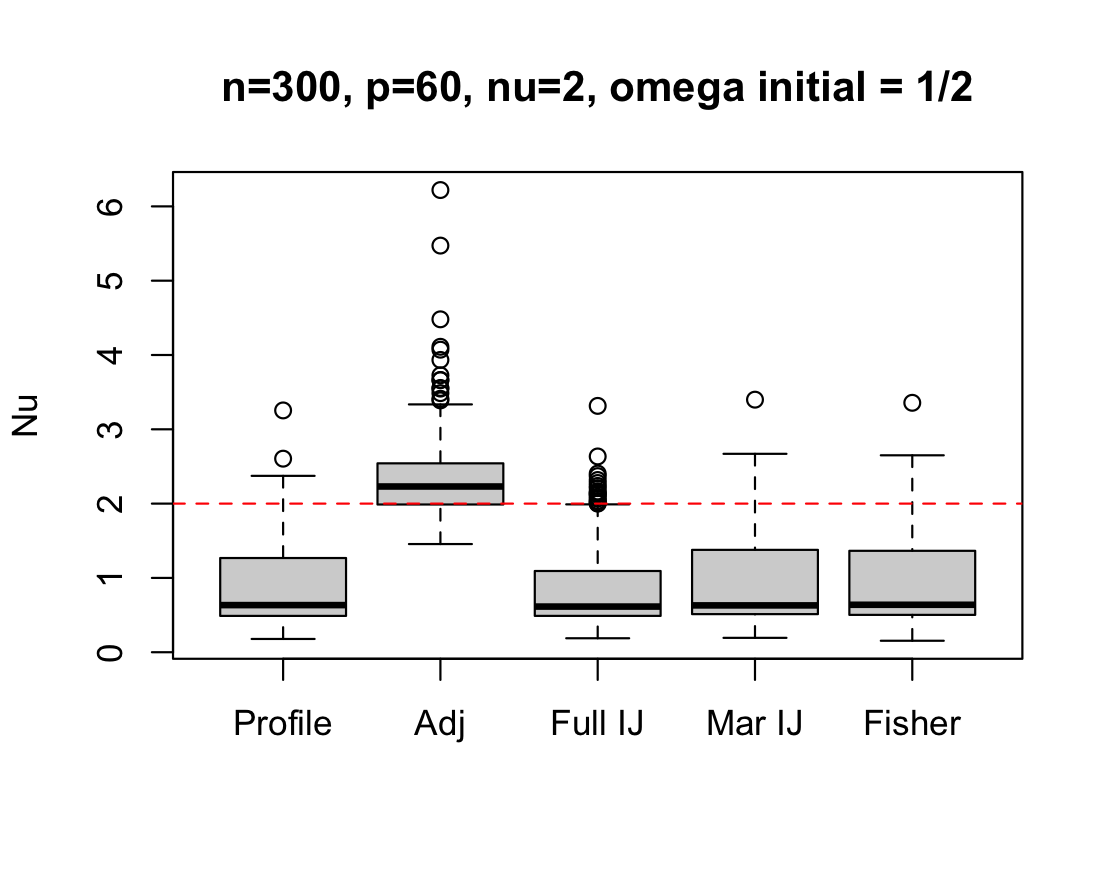}
        \caption{(a) $\omega_{\text{init}} = 1/2$}
    \end{subfigure}
    \hfill
    \begin{subfigure}[t]{0.32\textwidth}
        \centering
        \includegraphics[width=\linewidth]{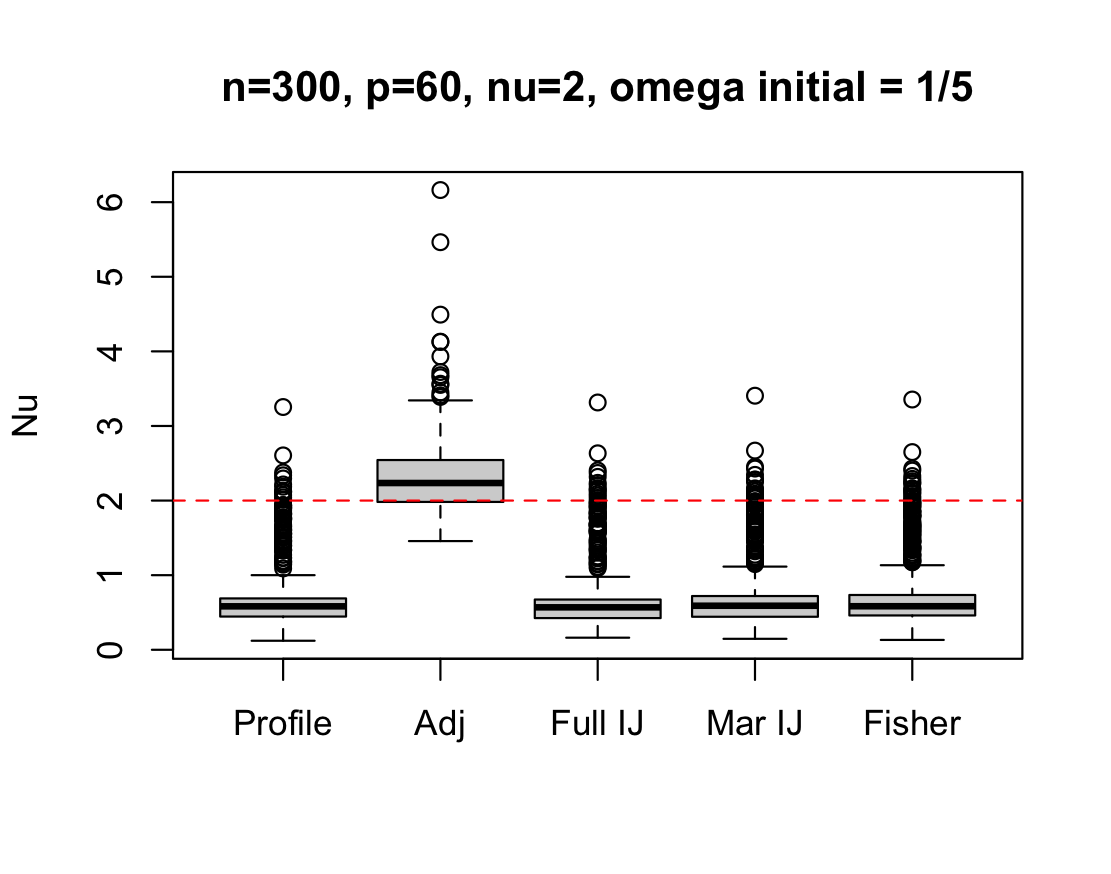}
        \caption{(b) $\omega_{\text{init}} = 1/5$}
    \end{subfigure}
    \hfill
    \begin{subfigure}[t]{0.32\textwidth}
        \centering
        \includegraphics[width=\linewidth]{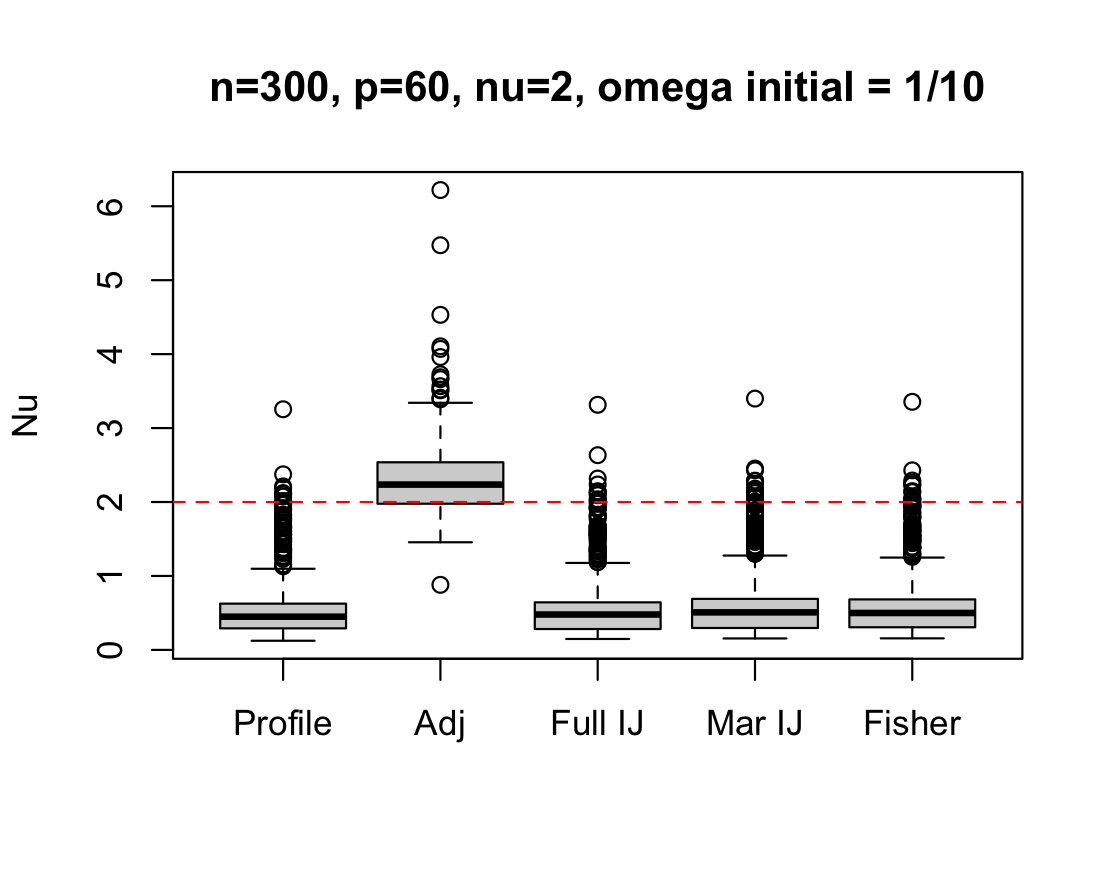}
        \caption{(c) $\omega_{\text{init}} = 1/10$}
    \end{subfigure}
    \caption*{These 3 boxplots of $\hat \nu$ optimized with different $\omega_{\text{init}}$ values with (a) at truth, (b) at 1/5, and (c) at 1/10. The red line indicates the true $\nu$ value.}
    \label{fig:omega_init_comparison}
\end{figure}

\section{Additional simulation results for $\nu$ estimation}

\begin{table}[h!]
\centering
\scriptsize
\begin{tabular}{|c|c|c|c|c|c|c|c|c|}
\hline
\textbf{Method} & \textbf{p = 1} & \textbf{p = 2} & \textbf{p = 5} & \textbf{p = 10} & \textbf{p = 20} & \textbf{p = 40} & \textbf{p = 60} & \textbf{p = 80} \\
\hline
\textbf{Profile} & 0.5405 & 0.5403 & 0.5388 & 0.5390 & 0.5358 & 0.5371 & 0.5372 & 0.5430 \\
\textbf{Adj Profile} & 0.5412 & 0.5416 & 0.5418 & 0.5450 & 0.5465 & 0.5531 & 0.5566 & 0.5556 \\
\textbf{Jeffrey’s Full} & 0.5440 & 0.5437 & 0.5420 & 0.5420 & 0.5383 & 0.5387 & 0.5380 & 0.5422 \\
\textbf{Marg Jeff ($\nu$ block)} & 0.5449 & 0.5447 & 0.5429 & 0.5428 & 0.5390 & 0.5391 & 0.5377 & 0.5422 \\
\hline
\end{tabular}
\caption{RMSE of $\hat{\nu}$ under different methods, with true $\nu = 5$ ($n = 2500$)}
\label{tab:rmse_nu5}
\end{table}

\begin{table}[h!]
\centering
\scriptsize
\begin{tabular}{|c|c|c|c|c|c|c|c|c|}
\hline
\textbf{Method} & \textbf{p = 1} & \textbf{p = 2} & \textbf{p = 5} & \textbf{p = 10} & \textbf{p = 20} & \textbf{p = 40} & \textbf{p = 60} & \textbf{p = 80} \\
\hline
\textbf{Profile} & 1.5032 & 1.5008 & 1.4965 & 1.4946 & 1.4863 & 1.4750 & 1.4883 & 1.4670 \\
\textbf{Adj Profile} & 1.5020 & 1.5015 & 1.5005 & 1.5032 & 1.5014 & 1.5123 & 1.5387 & 1.5369 \\
\textbf{Jeffrey’s Full} & 1.5144 & 1.5118 & 1.5072 & 1.5039 & 1.4962 & 1.4870 & 1.4959 & 1.4739 \\
\textbf{Marg Jeff ($\nu$ block)} & 1.5170 & 1.5146 & 1.5100 & 1.5083 & 1.4995 & 1.4897 & 1.4983 & 1.4761 \\
\hline
\end{tabular}
\caption{RMSE of $\hat{\nu}$ under different methods, with true $\nu = 10$ ($n = 4500$)}
\label{tab:rmse_nu10}
\end{table}

\begin{table}[h!]
\centering
\scriptsize
\begin{tabular}{|c|c|c|c|c|c|c|c|c|}
\hline
\textbf{Method} & \textbf{p = 1} & \textbf{p = 2} & \textbf{p = 5} & \textbf{p = 10} & \textbf{p = 20} & \textbf{p = 40} & \textbf{p = 60} & \textbf{p = 80} \\
\hline
Profile & 0.07597 & 0.06875 & 0.04292 & 0.01031 & -0.06994 & -0.28458 & -1.15143& -1.65848 \\
Adj Profile & 0.08858 & 0.08832 & 0.08517 & 0.09308 & 0.10556 & 0.16210 & 0.31432 & 0.62606\\
Jeffrey’s Full & 0.09234 & 0.08502 & 0.05897 & 0.02606 & -0.05509 & -0.27202 & -1.17377 & -1.65165\\
Marg Jeff ($\nu$ block) & 0.09865 & 0.09138 & 0.06549 & 0.03291 & -0.04737 & -0.26154 & -1.11832 & -1.66641\\
\hline
\end{tabular}
\caption{Bias of $\hat{\nu}$ under different methods, with  with true $\nu = 2$ ($n = 300$)}
\label{tab:bias_nu2}
\end{table}

\begin{table}[h!]
\centering
\scriptsize
\begin{tabular}{|c|c|c|c|c|c|c|c|c|}
\hline
\textbf{Method} & \textbf{p = 1} & \textbf{p = 2} & \textbf{p = 5} & \textbf{p = 10} & \textbf{p = 20} & \textbf{p = 40} & \textbf{p = 60} & \textbf{p = 80}\\
\hline
Profile & 0.06045 & 0.05722 & 0.04814 & 0.03471 & 0.00962 & -0.04086 & -0.09318 & -0.1539 \\
Adj Profile & 0.06375 & 0.06302 & 0.06159 & 0.06106 & 0.06178 & 0.06391 & 0.06639 & 0.06302 \\
Jeffrey’s Full & 0.07084 & 0.06759 & 0.05850 & 0.04505 & 0.01980 & -0.03078 & -0.08341 & -0.14428 \\
Marg Jeff ($\nu$ block) & 0.07379 & 0.07055 & 0.06145 & 0.04799 & 0.02279 & -0.02777 & -0.08009 & -0.14089\\
\hline
\end{tabular}
\caption{Bias of $\hat{\nu}$ under different methods, with  with true $\nu = 5$ ($n = 2500$)}
\label{tab:bias_nu5}
\end{table}

\begin{table}[h!]
\centering
\scriptsize
\begin{tabular}{|c|c|c|c|c|c|c|c|c|}
\hline
\textbf{Method} & \textbf{p = 1} & \textbf{p = 2} & \textbf{p = 5} & \textbf{p = 10} & \textbf{p = 20} & \textbf{p = 40} & \textbf{p = 60} & \textbf{p = 80}\\
\hline
Profile & 0.23342 & 0.22895 & 0.21841 & 0.20142 & 0.16987 & 0.10004 & 0.03927 & -0.03360\\
Adj Profile & 0.23480 & 0.23370 & 0.23339 & 0.23255 & 0.23607 & 0.23669 & 0.23806 & 0.24420\\
Jeffrey’s Full & 0.25715 & 0.25238 & 0.24176 & 0.22428 & 0.19364 & 0.12363 & 0.06218 & -0.01048\\
Marg Jeff ($\nu$ block) & 0.26313 & 0.25850 & 0.24798 & 0.23139 & 0.19978 & 0.12882 & 0.06858 & -0.004233\\
\hline
\end{tabular}
\caption{Bias of $\hat{\nu}$ under different methods, with  with true $\nu = 10$ ($n = 4500$)}
\label{tab:bias_nu10}
\end{table}

\begin{table}[h!]
\centering
\scriptsize
\begin{tabular}{|c|c|c|c|c|c|c|c|c|}
\hline
\textbf{Method} & \textbf{p = 1} & \textbf{p = 2} & \textbf{p = 5} & \textbf{p = 10} & \textbf{p = 20} & \textbf{p = 40} & \textbf{p = 60} & \textbf{p = 80}\\
\hline
Profile & 0.33747 & 0.33753 & 0.33392 & 0.33635 & 0.33112 & 0.33073 & 0.53940 & 0.17883\\
Adj Profile & 0.34017 & 0.34136 & 0.34301 & 0.35376 & 0.36940 & 0.41888 & 0.48912 & 0.64864\\
Jeffrey’s Full & 0.34256 & 0.34263 & 0.33903 & 0.34154 & 0.33632 & 0.33615 & 0.54398 & 0.17451\\
Marg Jeff ($\nu$ block) & 0.34414 & 0.34423 & 0.34066 & 0.34318 & 0.33809 & 0.33745 & 0.55581 & 0.17077\\
\hline
\end{tabular}
\caption{Standard error of $\hat{\nu}$ under different methods, with  with true $\nu = 2$ ($n = 300$)}
\label{tab:SE_nu2}
\end{table}

\begin{table}[h!]
\centering
\scriptsize
\begin{tabular}{|c|c|c|c|c|c|c|c|c|}
\hline
\textbf{Method} & \textbf{p = 1} & \textbf{p = 2} & \textbf{p = 5} & \textbf{p = 10} & \textbf{p = 20} & \textbf{p = 40} & \textbf{p = 60 } & \textbf{p = 80}\\
\hline
Profile & 0.53769 & 0.53784 & 0.53716 & 0.53844 & 0.53626 & 0.53604 & 0.52960 & 0.52134\\
Adj Profile & 0.53792 & 0.53841 & 0.53882 & 0.54206 & 0.54355 & 0.54997 & 0.55319 & 0.55259\\
Jeffrey’s Full & 0.53989 & 0.54003 & 0.53935 & 0.54065 & 0.53851 & 0.53833 & 0.53207 & 0.52316\\
Marg Jeff ($\nu$ block) & 0.54046 & 0.54061 & 0.53993 & 0.54123 & 0.53910 & 0.53890 & 0.53228 & 0.52410\\
\hline
\end{tabular}
\caption{Standard error of $\hat{\nu}$ under different methods, with  with true $\nu = 5$ ($n = 2500$)}
\label{tab:se_nu5}
\end{table}

\begin{table}[h!]
\centering
\scriptsize
\begin{tabular}{|c|c|c|c|c|c|c|c|c|}
\hline
\textbf{Method} & \textbf{p = 1} & \textbf{p = 2} & \textbf{p = 5} & \textbf{p = 10} & \textbf{p = 20} & \textbf{p = 40} & \textbf{p = 60} & \textbf{p = 80}\\
\hline
Profile & 1.48642 & 1.48470 & 1.48199 & 1.48245 & 1.47807 & 1.47308 & 1.48930 & 1.46806\\
Adj Profile & 1.48507 & 1.48467 & 1.48374 & 1.48662 & 1.48422 & 1.49514 & 1.52169 & 1.51890\\
Jeffrey’s Full & 1.49389 & 1.49207 & 1.48920 & 1.48859 & 1.48506 & 1.48332 & 1.49608 & 1.47536\\
Marg Jeff ($\nu$ block) & 1.49545 & 1.49387 & 1.49100 & 1.49196 & 1.48763 & 1.48556 & 1.49827 & 1.47753\\
\hline
\end{tabular}
\caption{Standard error of $\hat{\nu}$ under different methods, with  with true $\nu = 10$ ($n = 4500$)}
\label{tab:se_nu10}
\end{table}

\section{Order of Bias and Variance for $\hat{\nu}$}

\begin{theorem}
For the profile log-likelihood estimation of $\hat{\nu}$, both the variance and the bias have order $\mathcal{O}(n^{-1})$ asymptotically.
\end{theorem}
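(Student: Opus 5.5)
We will treat $\hat\nu$ as the maximizer of the profile log-likelihood $\ell_{\rm p}(\nu;y)$ in \eqref{eq:logprofile} and use the classical stochastic (Taylor) expansion of a maximum likelihood estimator. The argument is for fixed $p$ as $n\to\infty$, with the true $\nu$ interior and the model regular. Work in the full parameter $\theta=(\beta,\sigma,\nu)$ with true value $\theta_0$. The full MLE $\hat\theta=(\hat\beta,\hat\sigma,\hat\nu)$ has $\nu$-component equal to the maximizer of $\ell_{\rm p}$, because $(\hat\beta_\nu,\hat\sigma_\nu)$ exist by \cite{maronna1976} and \cite{Zellner1976}. It therefore suffices to analyze the $\nu$-component of $\hat\theta$.

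\textbf{Step 1: expand the score.} Write $U(\theta)=\partial\ell/\partial\theta$. Expand $0=U(\hat\theta)$ about $\theta_0$ to third order:
\[
0=U_r+ U_{rs}(\hat\theta-\theta_0)_s+\tfrac12 U_{rst}(\hat\theta-\theta_0)_s(\hat\theta-\theta_0)_t+O_p(n^{-1/2}),
\]
with summation over repeated indices. For $t$ errors with $\nu$ bounded away from $0$ and $\infty$, each summand of $U$, $U_{rs}$ and $U_{rst}$ is a rational function of $z_i$ times digamma and trigamma terms in $\nu$. These have all needed moments. Hence $U=O_p(n^{1/2})$, $U_{rs}=-I_{rs}+O_p(n^{1/2})$, and $U_{rst}=O_p(n)$, where $I$ is the expected information \eqref{eq:expfisher}, which is $O(n)$. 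We assume the design satisfies $n^{-1}\sum x_ix_i^T\to$ a positive definite limit. Inverting the expansion gives
\[
\hat\theta-\theta_0=I^{-1}U+O_p(n^{-1}),
\]
and the $O_p(n^{-1})$ term has a mean of the form $I^{-1}\{\cdots\}$ built from the cumulants $\kappa_{r,st}$ and $\kappa_{rst}$, each of which is $O(n)$.

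\textbf{Step 2: bias.} Take expectations using $E\{U\}=0$. The standard Cox--Snell formula gives
\[
E(\hat\theta_r-\theta_{0r})=I^{rs}I^{tu}\bigl(\kappa_{st,u}+\tfrac12\kappa_{stu}\bigr)+o(n^{-1}).
\]
Each $I^{\cdot\cdot}$ is $O(n^{-1})$ and each $\kappa$ is $O(n)$, so the bias is $O(n^{-2}\cdot n)=O(n^{-1})$. Reading off the $\nu$-component gives the bias claim. An alternative, which avoids the full $\theta$ expansion, is to apply the same one-dimensional argument to $\ell_{\rm p}$ itself. That route needs a correction for the non-zero mean of the profile score, which is $O(1)$ (cf. \cite{cox1987}). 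Dividing by $j_{\rm p}=O_p(n)$ again gives $O(n^{-1})$.

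\textbf{Step 3: variance.} From $\hat\nu-\nu=(I^{-1}U)_\nu+O_p(n^{-1})$ we get
\[
\operatorname{var}(\hat\nu)=(I^{-1})_{\nu\nu}+O(n^{-3/2}).
\]
The leading term equals the inverse of the Schur complement $I_{\nu\nu}-I_{\nu\sigma}I_{\sigma\sigma}^{-1}I_{\sigma\nu}$; here $\beta$ drops out by block orthogonality in \eqref{eq:expfisher}. Every entry of this Schur complement is $n$ times a fixed function of $\nu$. It is positive because the $2\times2$ $(\sigma,\nu)$ block has positive determinant, which is exactly what makes the prior \eqref{eq:IJ_prior} well defined. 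So $\operatorname{var}(\hat\nu)=c(\nu)/n+o(n^{-1})$, which is consistent with $j_{\rm p}^{1/2}(\hat\nu)(\hat\nu-\nu)\to N(0,1)$.

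\textbf{Main obstacle.} The delicate step is turning the stochastic expansion into statements about actual moments of $\hat\nu$. With positive probability, for instance when the Fonseca et al.\ condition $\sum(\hat z_i^2-1)^2<2n$ holds, $\ell_{\rm p}$ is increasing and $\hat\nu=\infty$, so the exact mean and variance need not exist. We will handle this in two ways. First, we interpret ``bias'' and ``variance'' as those of the asymptotic expansion, as is standard. Second, we work on the event where $\hat\omega=1/\hat\nu$ lies in a compact neighbourhood of the truth, or equivalently truncate $\hat\nu$. We then show that the complementary event has probability $o(n^{-1})$, using exponential or moment bounds on the averaged score. The remaining routine work is checking the moment conditions on the third derivatives, which hold since the $z$-dependence is bounded rational functions.
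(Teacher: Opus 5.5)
Your proposal is correct. It uses the same Cox--Snell machinery as the paper, but in a more complete form.

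\textbf{How the two routes differ.} The paper runs a \emph{scalar} expansion in $\nu$ alone, with $(\beta,\sigma)$ replaced by their maximum likelihood estimates. This gives $b=(K+2J)/(2I^2)$ with $I=I_{\nu\nu}$. For the variance, the paper shows that each piece of the observed profile information $j_{\rm p}$ is $O(n)$ and concludes $\mathrm{var}(\hat\nu)\approx 1/j_{\rm p}=O(n^{-1})$. You instead use the full multiparameter expansion in $\theta=(\beta,\sigma,\nu)$ together with the inverse expected information.

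\textbf{What your route buys.}
\begin{enumerate}
\item Your bias formula keeps the nuisance-parameter terms that the scalar formula drops. These are the coupling through $I_{\sigma\nu}\neq 0$ and the $\beta$-block terms $I^{\nu\nu}I^{tu}(\kappa_{\nu t,u}+\tfrac12\kappa_{\nu tu})$. The latter scale like $p/n$ and reflect the cost of estimating $(\beta,\sigma)$.
\item Your formula does not rely on the plug-in score $\ell_\nu(\nu,\hat\beta,\hat\sigma)$ having mean zero. The paper implicitly assumes this, but it is false: as you note, the profile score has an $O(1)$ mean.
\item You identify $(I^{-1})_{\nu\nu}$ with the inverse of a Schur complement equal to $n$ times a strictly positive function of $\nu$. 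This supplies the \emph{lower} bound on the information that the step ``$j_{\rm p}=O(n)$, hence $\mathrm{var}(\hat\nu)=O(n^{-1})$'' actually needs. An upper bound on $j_{\rm p}$ only bounds $1/j_{\rm p}$ from below.
\item You flag that exact moments of $\hat\nu$ may not exist, because $\hat\nu=\infty$ with positive probability. The paper passes over this.
\end{enumerate}
The paper's route, in turn, gives an explicit, easily computed expression built from the closed-form $\nu$-derivatives of Fonseca et al.

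\textbf{Minor corrections.}
\begin{enumerate}
\item The $\nu$-score contains $\log(1+z_i^2/\nu)$, which is neither rational nor bounded. It does have all moments under $t$ errors, so nothing breaks.
\item Getting $U_{rst}=O_p(n)$ also needs a design condition such as $n^{-1}\sum_i\|x_i\|^3=O(1)$.
\item To turn the $O_p$ expansion into moments of the truncated estimator, a small probability for the bad event is not enough. You also need uniform integrability of the remainder on the good event.
\end{enumerate}
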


\begin{proof}
From \cite{reid2013}, asymptotically, 
\begin{equation}
     j_{p}^{1/2}(\hat{\nu})\,(\hat{\nu}-\nu)\;\xrightarrow{\mathcal{L}}\;\mathcal{N}(0,1),
\end{equation}
where $j_p(\nu) = - \partial^2 \ell_p(\nu, 
\hat{\lambda}_\nu;y,x)/\partial \nu^2$.
Then the asymptotic variance approximation of $\hat{\nu}$, for simplicity denoted by $Var(\hat{\nu})$, is $1/j_p(\hat{\nu})$. 
From \cite{Fonseca2008}, we have the first derivative of the profile log-likelihood of $\nu$ with respect to $\nu$. For the second derivative,
\begin{align*}
\frac{\partial^2 \ell_p(\nu, \hat{\lambda}_\nu;y, x)}{\partial \nu^2}  
& = \ell_{\nu\nu} + \ell_{\nu\lambda}\frac{\partial \hat{\lambda}_\nu}{ \partial \nu} \\
\end{align*} Define $f(\nu, \hat{\lambda}_\nu)= \partial \ell(\nu, \hat{\lambda}_\nu ; y, x)/\partial \hat{\lambda}_\nu = 0:=\ell_\lambda $. If $\partial f(\nu, \hat{\lambda}_\nu)/\partial \hat{\lambda}_\nu$ is invertible, taking the derivative and set $\partial f(\nu, \hat{\lambda}_\nu)/\partial \nu = 0$
\begin{align*}
\frac{\partial \hat{\lambda}_\nu}{\partial \nu} = -\left( \frac{\partial f(\nu, \hat{\lambda}_\nu)}{\partial \hat{\lambda}_\nu}\right)^{-1} \frac{\partial f(\nu, \hat{\lambda}_\nu)}{\partial \nu}.
\end{align*}
Therefore, we have $$
j_p(\nu) = -\ell_{\nu\nu} + \ell_{\nu\lambda}(\ell_{\lambda\lambda})^{-1} \ell_{\lambda\nu}.
$$
The second derivative of the log-likelihood with respect to $\nu$ is 
$$
\ell_{\nu\nu} = \frac{n}{4} \left\{\psi'\left(\frac{\nu+1}{2} \right) - \psi'\left( \frac{\nu}{2}\right) \right \} + \frac{1}{2} \sum_{i=1}^n\left\{\frac{1}{\nu}  - \frac{1}{\nu + \hat{z}_{i\nu}^2} - \frac{\hat{z}_{i\nu}^2-1}{(\nu + \hat{z}_{i\nu}^2)^2}\right\} = \mathcal{O}(n).
$$ And the second derivative of the log-likelihood with respect to $\nu$ and $\lambda$ is 
\begin{align*}
\ell_{\nu\beta} = \sum_{i=1}^n - \frac{1}{\hat{\sigma}_\nu^2}\frac{(\hat{z}_{i\nu}^2 -1)(y_i - x_i^T\beta)x_i}{(\nu+\hat{z}_{i\nu}^2)^2} = \mathcal{O}(n) \quad 
\ell_{\nu\sigma} = \frac{1}{\hat{\sigma_\nu}^3}\sum_{i=1}^n \frac{(\hat{z}_{i\nu}^2-1)(y_i - x_i^T \beta)^2}{(\nu+ \hat{z}_{i\nu}^2)^2}  = \mathcal{O}(n)
\end{align*}
The term $\ell_{\lambda \lambda}$ is the observed Fisher information sub-matrix in the adjusted profile likelihood, $j_{\lambda\lambda}$, which has order of $\mathcal{O}(n)$. Therefore, $j_p(\nu) = \mathcal{O}(n)$, so we have the $Var(\hat{\nu}) = \mathcal{O}(n^{-1})$.

Next, we analyze the bias part; we assume the maximum likelihood estimates of $\beta, \sigma$, and $\nu$ exist.  The maximum profile likelihood estimates coincide with the maximum likelihood estimates, so it suffices to show the order of bias for the maximum likelihood estimate of $\nu$. We follow the proof by \cite{Cox1968}.
The maximum likelihood equation $\ell'(\hat{\nu}, \hat{\beta}, \hat{\sigma};y,x) = 0$ to the first order for $\nu$ is $$ \ell'(\nu) + (\hat{\nu} - \nu)\ell''(\nu)=0.$$
Let $$U^i =\frac{\partial  \ell(\nu, \hat{\beta}, \hat{\sigma}; y_i, x_i)}{\partial \nu} , V^i =  \frac{\partial^2  \ell(\nu, \hat{\beta}, \hat{\sigma}; y_i, x_i)}{\partial \nu^2},$$ it would be exactly $\frac{\partial \ell_p(\nu;y,x)}{\partial \nu}$ and $\ell_{\nu \nu}$in the previous part with the $\hat{\beta}_{\nu}, \hat{\sigma}_\nu$ replaced by the maximum likelihood estimate, respectively. Then, we replace $-\ell''(\nu)$ by its expectation
$$
I := \sum_{i=1}^n E(-V^i) = \frac{n}{4} 
\left\{ \psi'\left(\frac{\nu}{2}\right) - \psi' \left(\frac{\nu+1}{2}  \right) - \frac{2(\nu + 5)}{\nu(\nu+1)(\nu+3)}\right\}$$ where $I$ is the total information in the sample, shown in \cite{Fonseca2008}.
Then, we have the standard first-order expressions
$$\hat{\nu} - \nu =\frac{\sum_{i=1}^n U^i}{I}, \quad Var(\hat{\nu} ) = \frac{1}{I}.$$
To obtain a more refined answer, we replace it with the second-order equation

$$\ell'(\nu)+ (\hat{\nu} - \nu) \ell''(\nu) + \frac{1}{2}(\hat{\nu} - \nu)^2 \ell'''(\nu) = 0,$$ and take expectation we have $$
E(\hat{\nu} - \nu) E(\ell''(\nu)) + cov(\hat{\nu} - \nu, \ell''(\nu)) + \frac{1}{2} E(\hat{\nu} - \nu)^2 E(\ell'''(\nu)) + cov(0.5(\hat{\nu} - \nu), \ell'''(\nu))
= 0 $$
Approximately, $$
cov(\hat{\nu} - \nu, \ell''(\nu)) = cov \left(\frac{\sum_{i=1}^n U^i}{I},  \sum_{i=1}^n V^i\right) = \frac{1}{I} cov\left(\sum_{i=1}^n U^i, \sum_{i=1}^n V^i \right) = \frac{J}{I},
$$
where $J = \sum_{i=1}^nE(U^iV^i)$, by independence between observation and 0 expectation of score. And if 
\begin{align*}
W^i  = \frac{\partial^3  \ell(\nu, \hat{\beta}, \hat{\sigma}; y_i, x_i)}{\partial \nu^3}
 = \frac{1}{8} \left\{\psi''\left( \frac{\nu+1}{2} \right) - \psi''\left( \frac{\nu} {2} \right)\right\}+ \frac{1}{2} \left\{ -\frac{1}{\nu^2} + \frac{1}{(\nu + \hat{z}_{i\nu}^2)^2} + \frac{\hat{z}_{i\nu}^2 -1}{(\nu+\hat{z}_{i\nu}^2)^3}\right\},
\end{align*} we take the expectation for the sum of $W^i$ \begin{align*}
K & := E(\sum_{i=1}^n W^i) = E(\ell'''(\nu))
= \frac{n}{8} \left\{\psi''\left( \frac{\nu+1}{2} \right) - \psi''\left( \frac{\nu} {2} \right)\right\} - \frac{n}{2} \frac{2\nu^2 + 21\nu + 31}{\nu^2(\nu+1)(\nu+3)(\nu+5)}.
\end{align*}

$I, J, K$ refer to a total over the sample and are of order $n$. 
\begin{align*}
cov(0.5(\hat{\nu} - \nu)^2, \ell'''(\nu)) 
&= \frac{1}{I^2} \sum_{i=1}^ncov((U^i)^2, W^i) = O(n^{-1}).
\end{align*}
Therefore, $$
-IE(\hat{\nu} - \nu) + \frac{J}{I} + \frac{K}{2I} = 0
\quad
\quad
b := E(\hat{\nu} - \nu) = \frac{1}{2I^2}(K + 2J),
$$
which is of order $n^{-1}$.
\end{proof}

\section{Additional simulation results for robust regression}

\subsection{Results when $p = 3$}
\begin{figure}[h!]
    \centering
    \caption{Overall RMSE of $\hat{\beta}$ with 10\%, 20\%, and 30\% two-point contaminated errors when $p = 3$}
    \label{fig:2pt_10_20_30}
    \begin{tabular}{ccc}
\includegraphics[width = 0.33\textwidth]{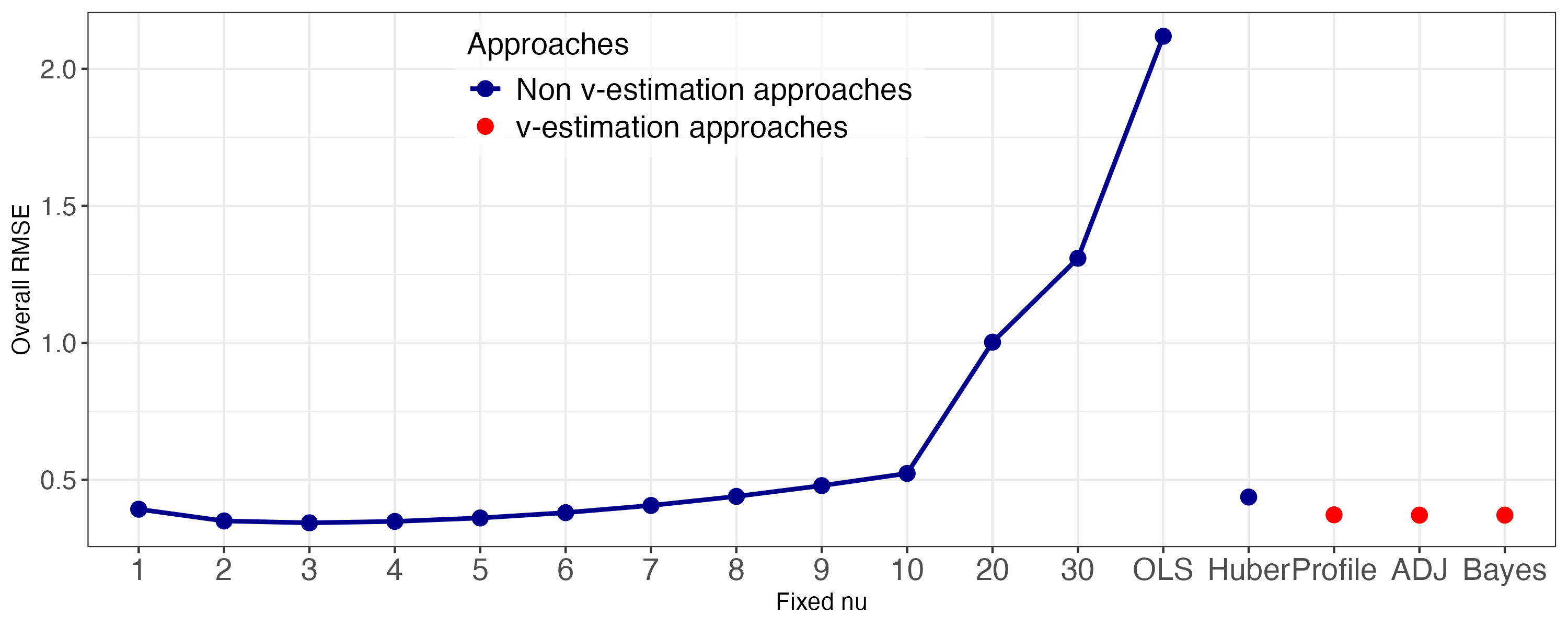 }   &     \includegraphics[width = 0.33\textwidth]{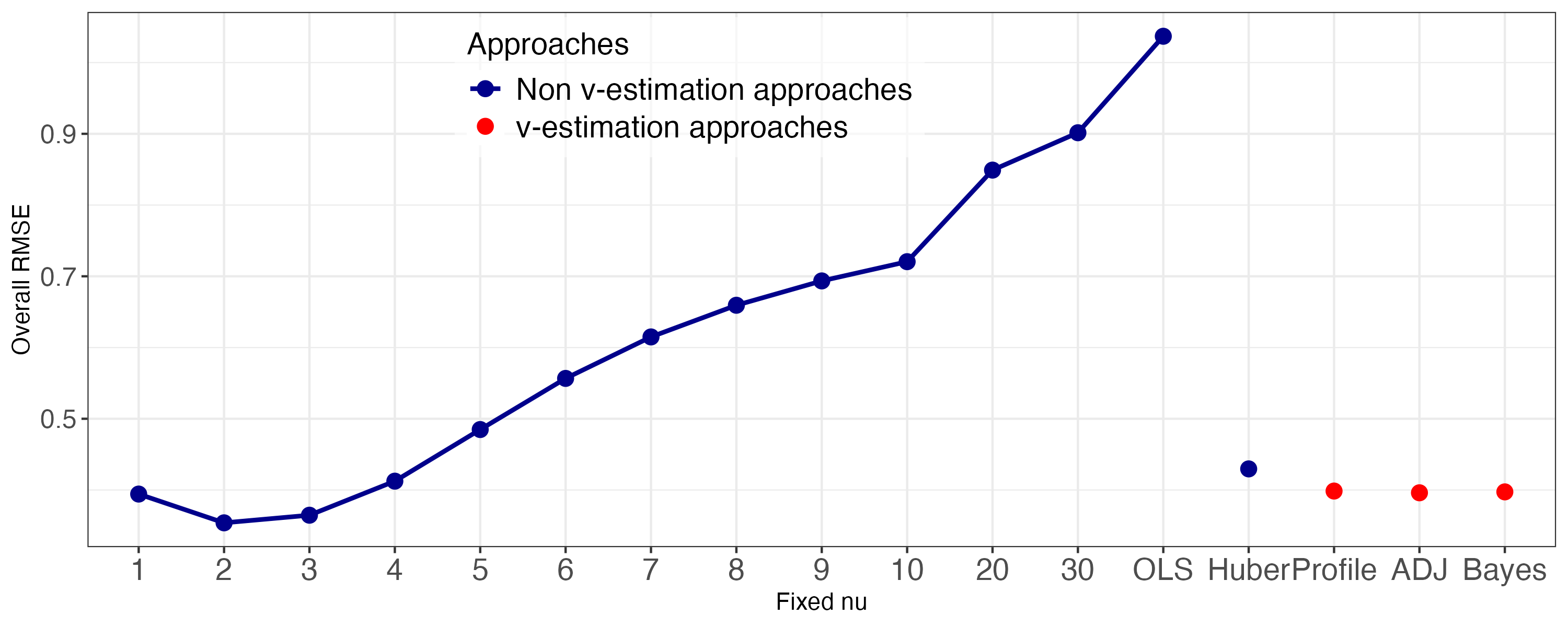 } 
         &     \includegraphics[width = 0.33\textwidth]{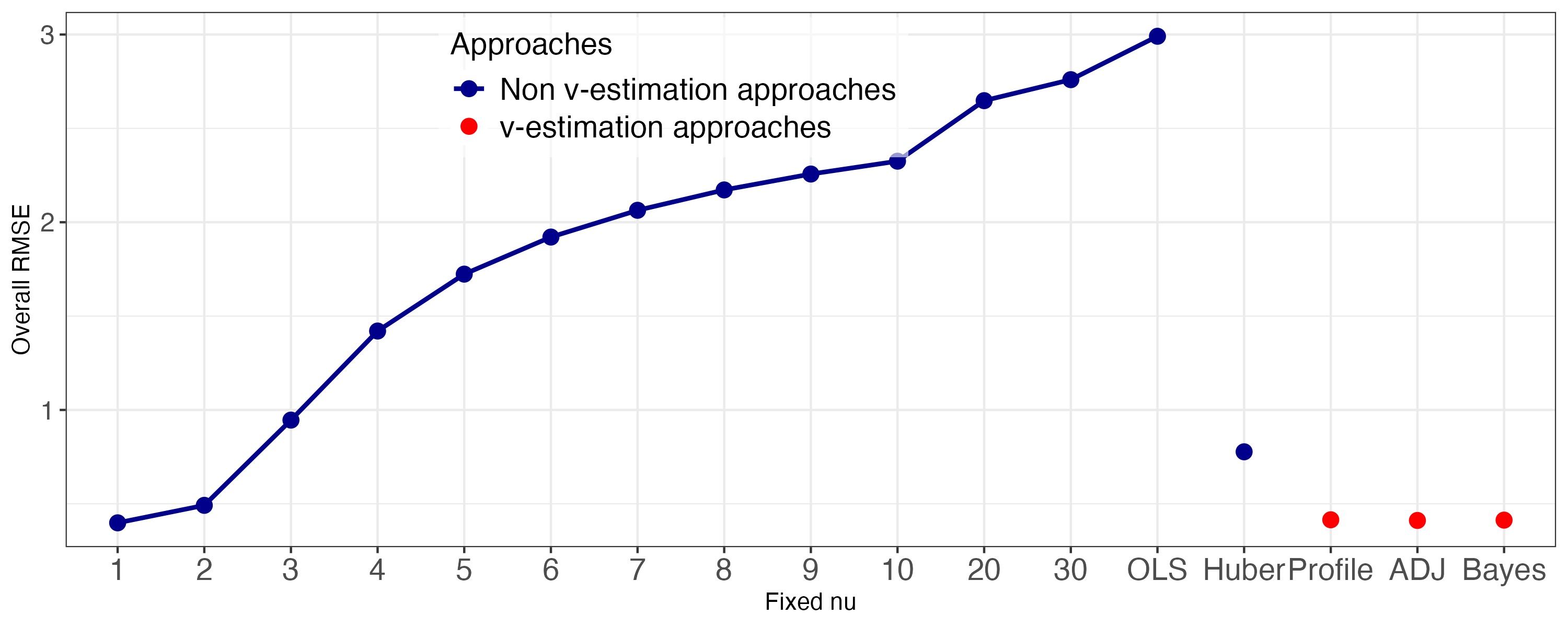 } 
    \end{tabular}
\end{figure}

\begin{figure}[h!]
    \centering
    \caption{Overall RMSE of $\hat{\beta}$ with 10\% 20\%, and 30\% $t(2)$ contaminated errors when $p = 3$}
    \label{fig:t2_10_20_30}
    \begin{tabular}{ccc}
      \includegraphics[width = 0.33\textwidth]{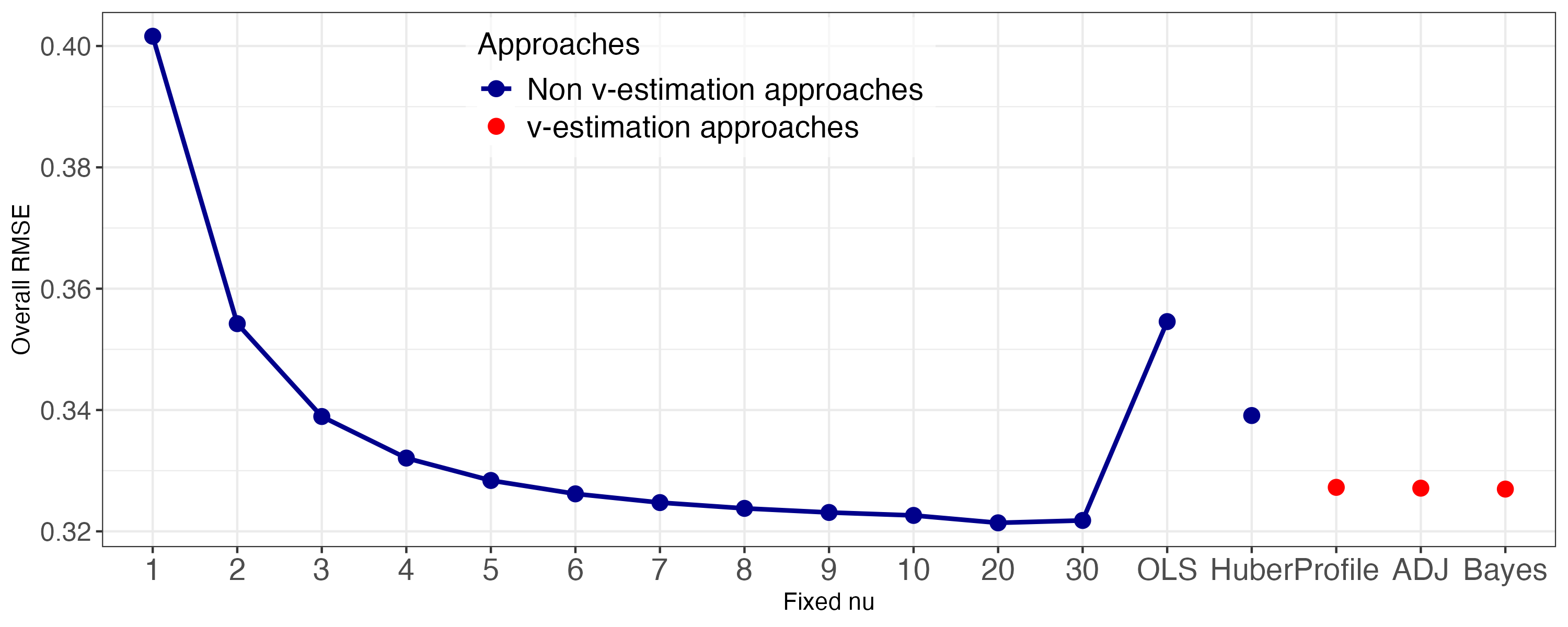 }  &         \includegraphics[width = 0.33\textwidth]{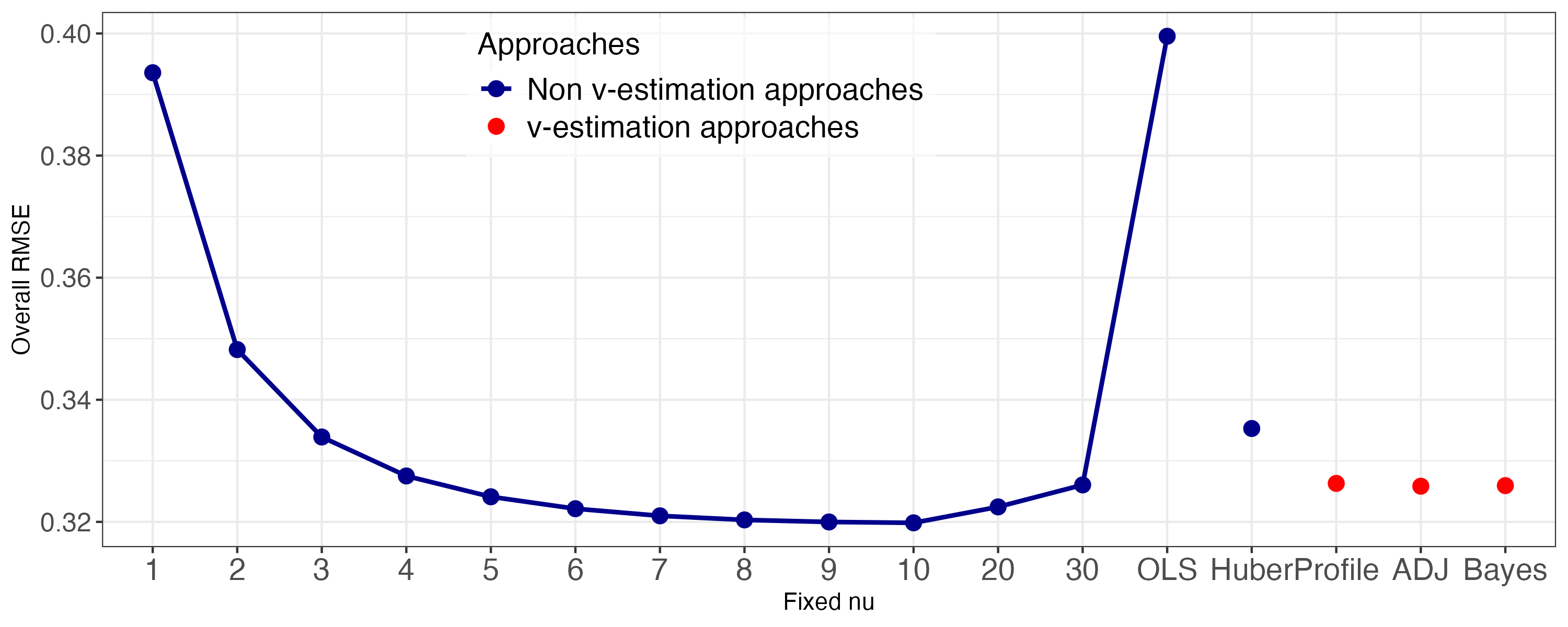} &        \includegraphics[width = 0.33\textwidth]{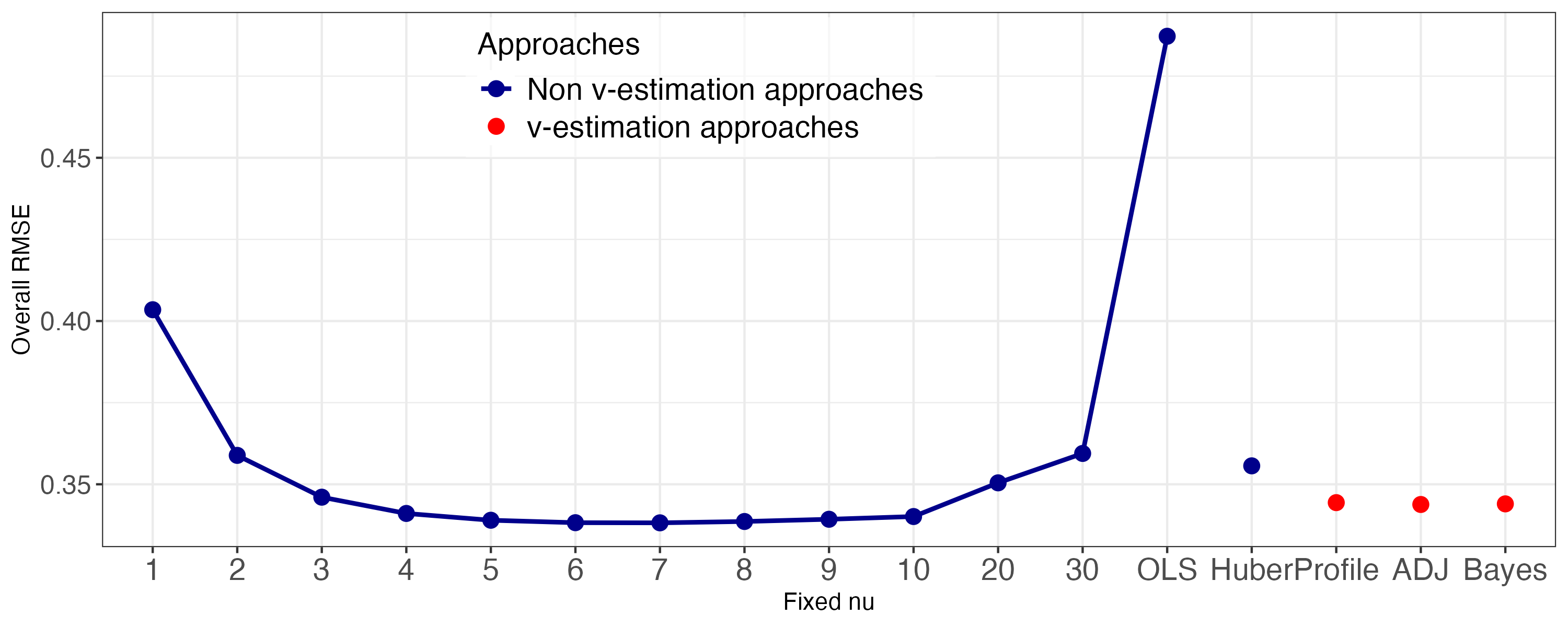}
    \end{tabular}
\end{figure}

\subsection{Results when $p = 80$}

\begin{figure}[h!]
    \centering
    \caption{Overall RMSE of $\hat{\beta}$ with 10\% 20\%, and 30\% $\chi^2(4) - 4$ contaminated errors when $p = 80$}
    \label{fig:chi4_10_20_30_highdim}
    \begin{tabular}{ccc}
    \includegraphics[width = 0.33\textwidth]{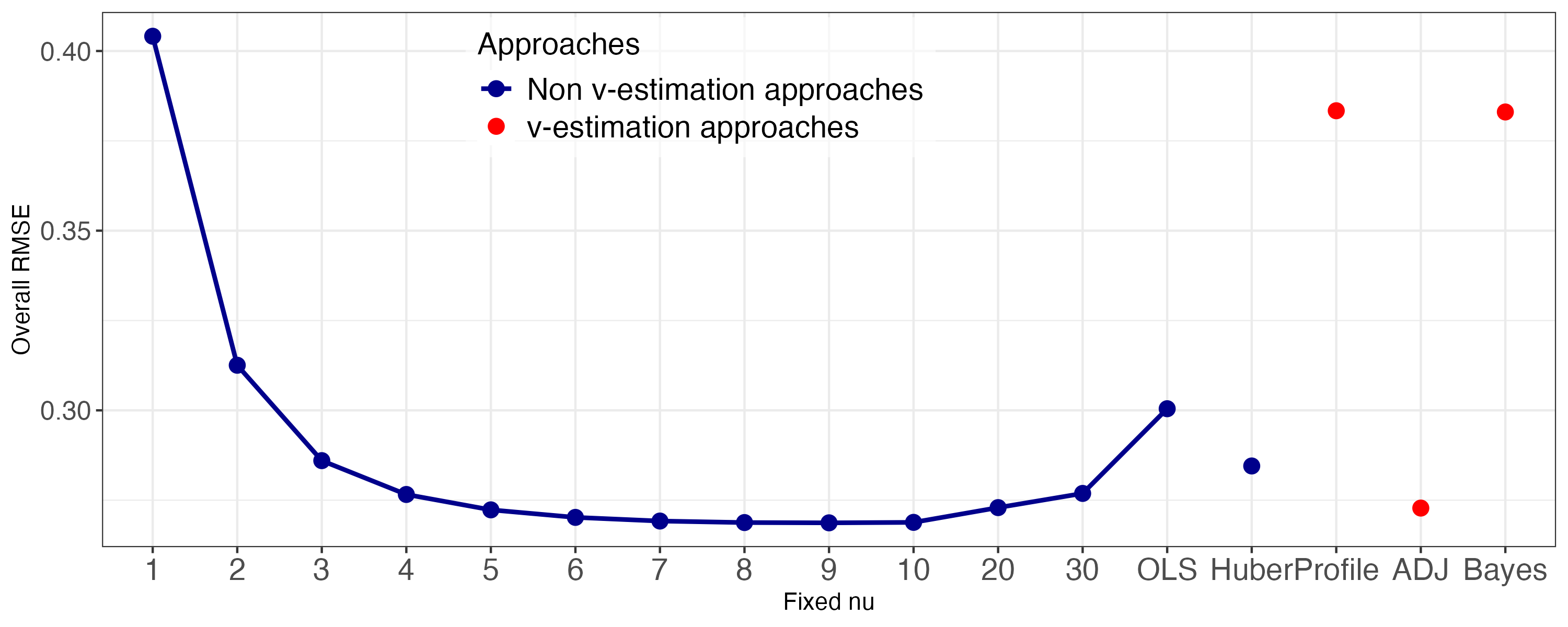 }  &        \includegraphics[width = 0.33\textwidth]{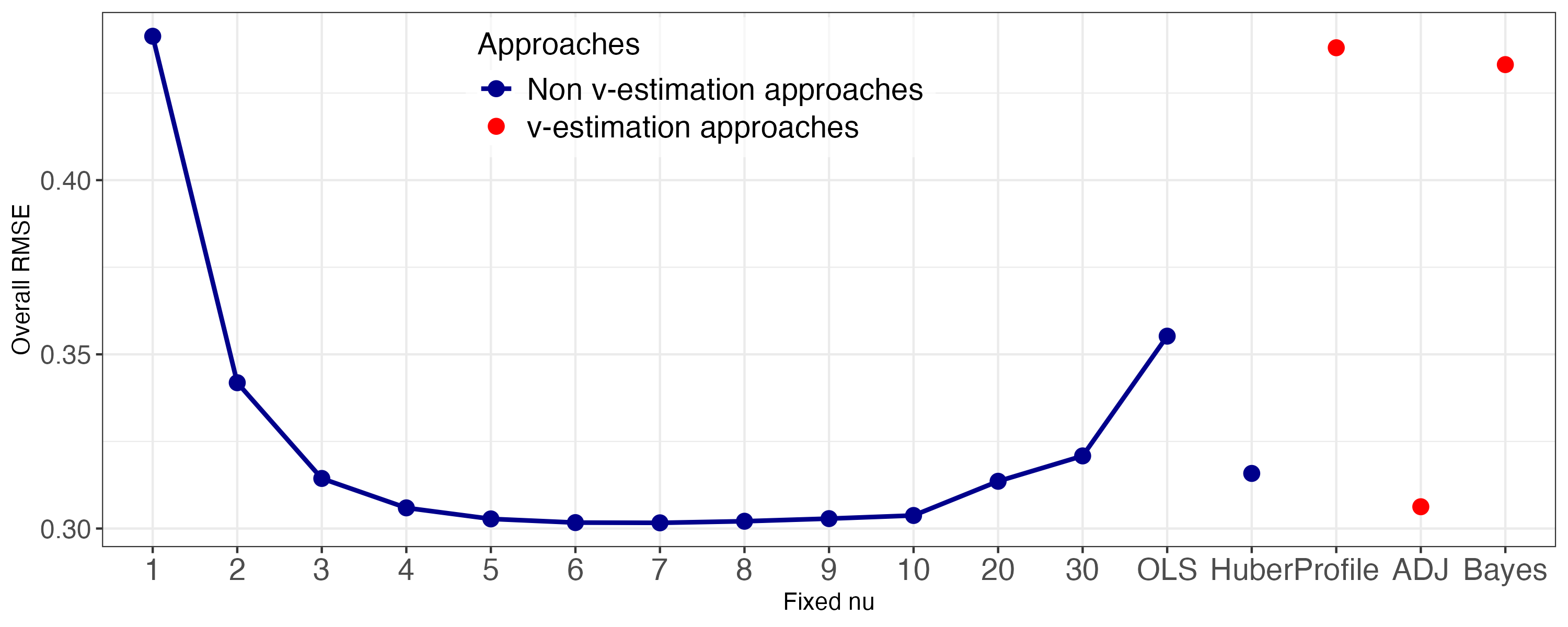}  
         &        \includegraphics[width = 0.33\textwidth]{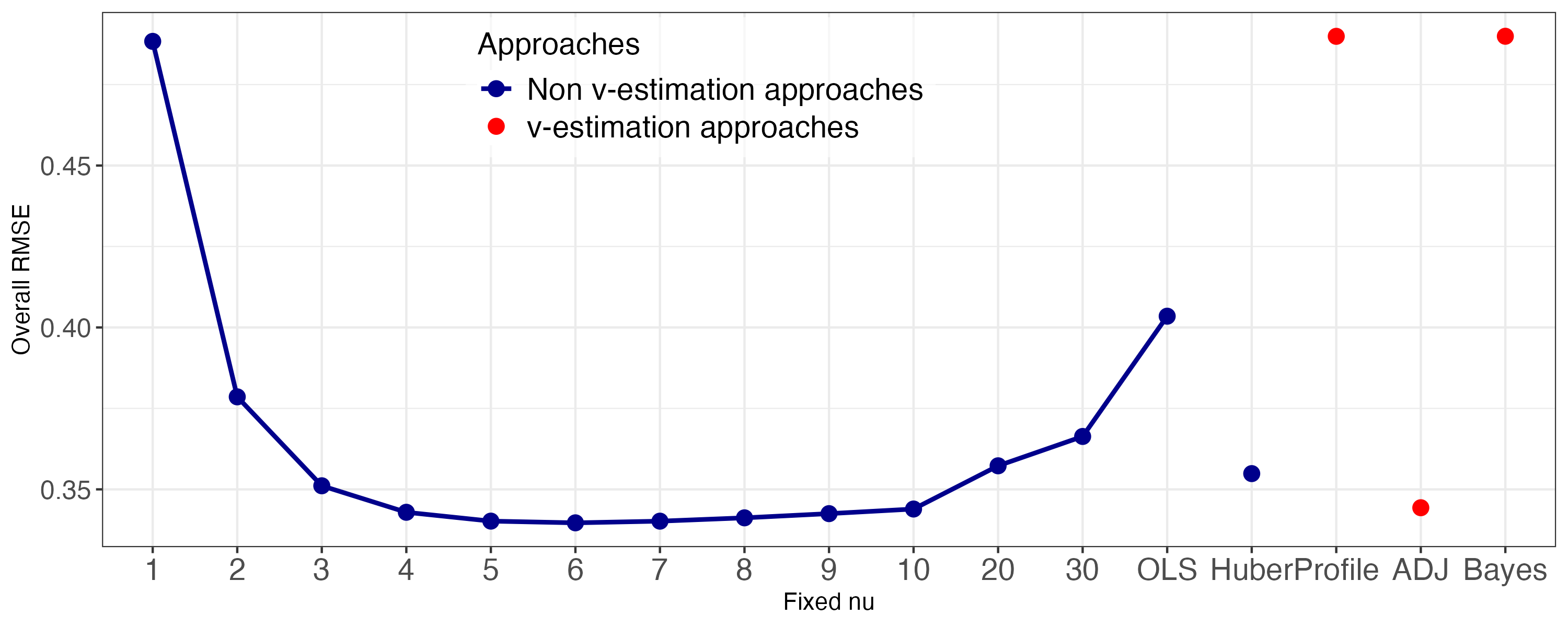} 
    \end{tabular}
\end{figure}

\begin{figure}[h!]
    \centering
    \caption{Overall RMSE of $\hat{\beta}$ with 10\% 20\%, and 30\% $t(2)$ contaminated errors when $p = 80$}
    \label{fig:t2_10_20_30_highdim}
    \begin{tabular}{ccc}
       \includegraphics[width = 0.33\textwidth]{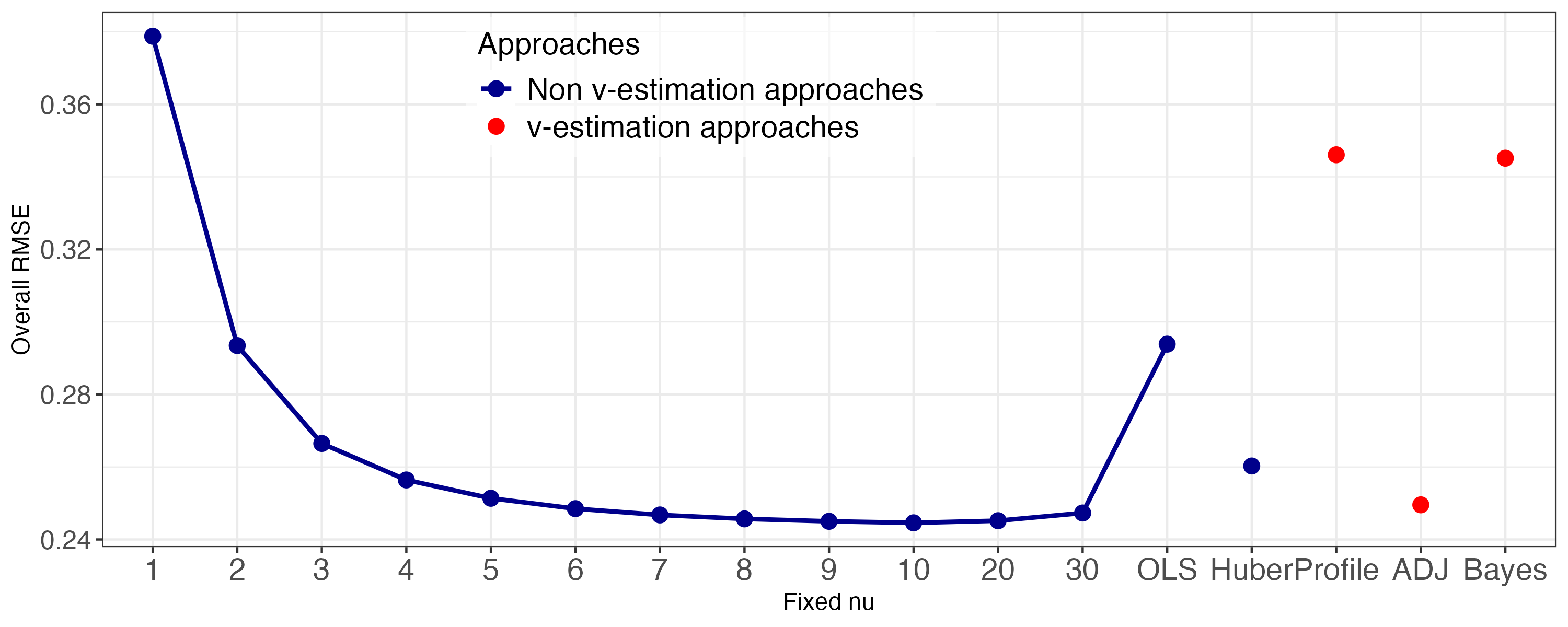 }  &         \includegraphics[width = 0.33\textwidth]{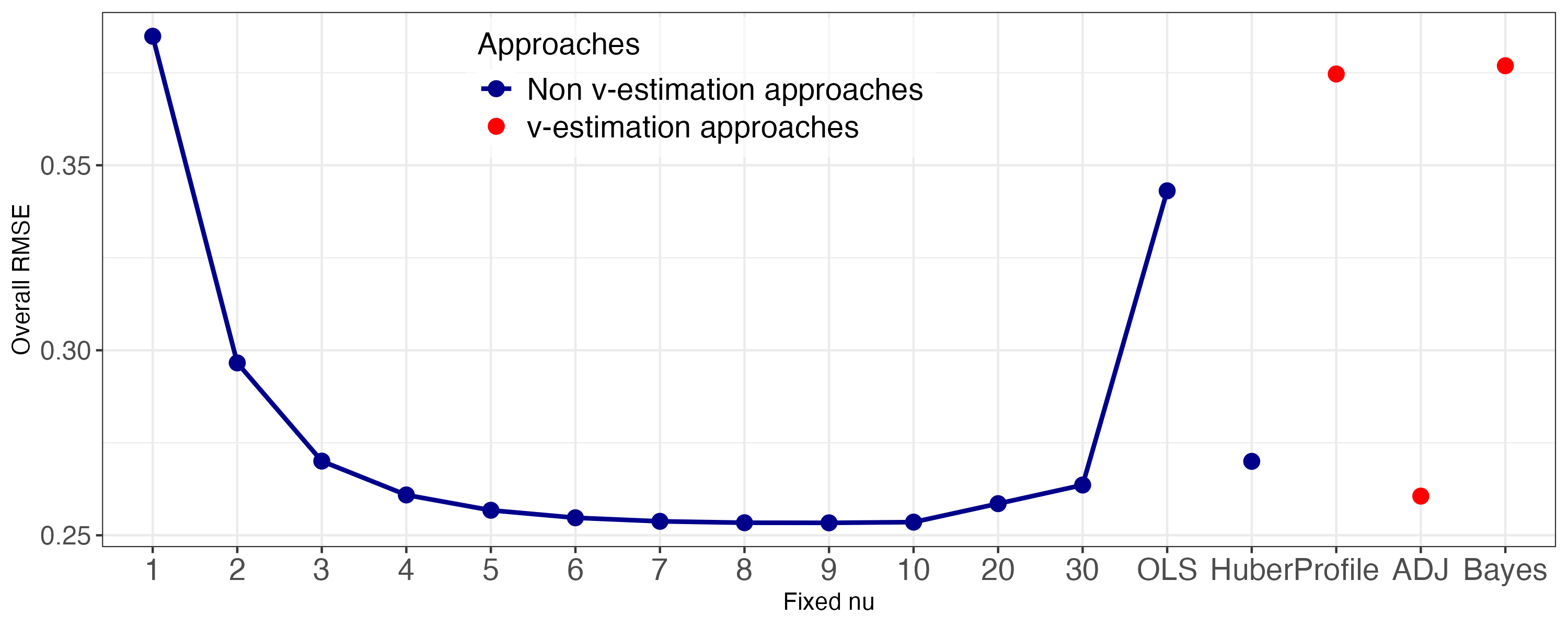 } &        \includegraphics[width = 0.33\textwidth]{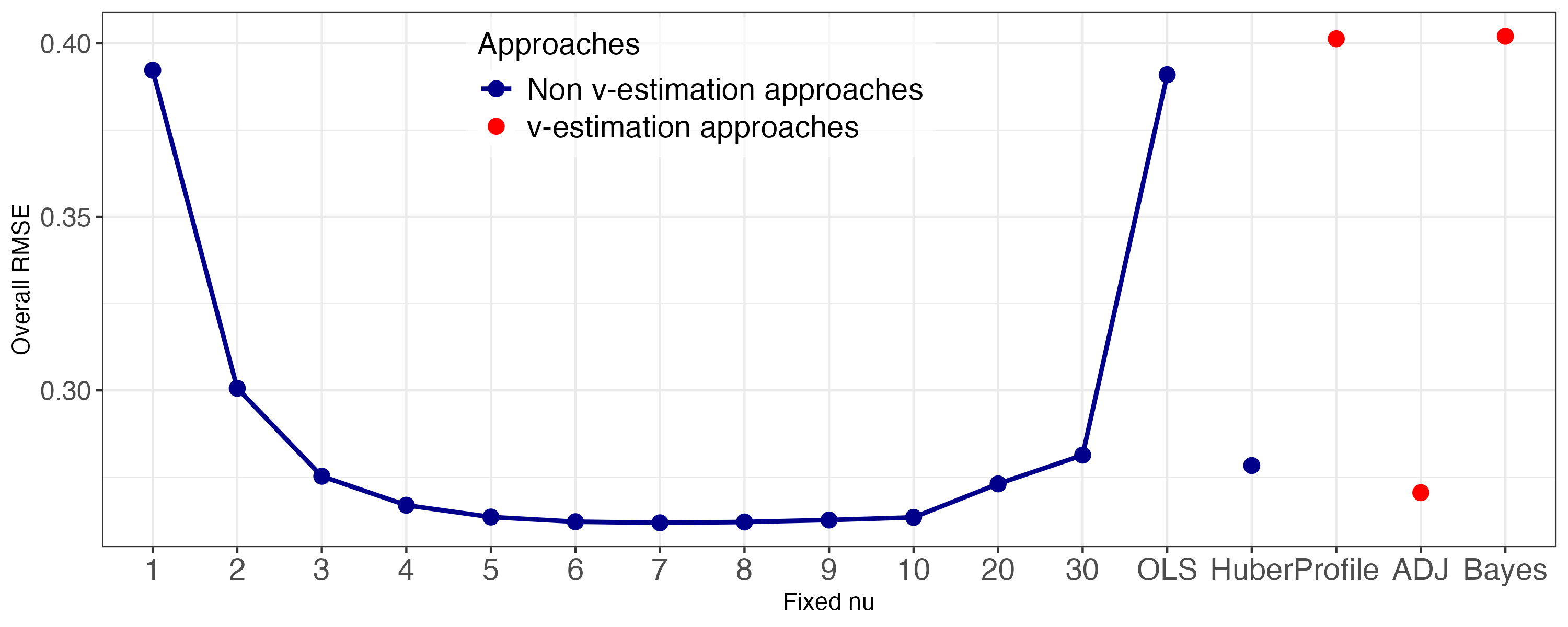 }
    \end{tabular}
\end{figure}

\section{Proof of Proper Posterior distribution for using $I_{\nu\nu}^{1/2}(\nu)$}
\label{proof_proper_posterior}
We followed the proof by \citep{Fonseca2008}.
Recall $$I_{\nu\nu}^{1/2}(\nu) = \left[
\psi'\left( \frac{\nu}{2} \right) -
\psi'\left( \frac{\nu + 1}{2} \right) -
\frac{2(\nu + 5)}{\nu(\nu + 1)(\nu + 3)}
\right]^{1/2}$$
\begin{corollary}
\label{coro_1}
The $I_{\nu\nu}^{1/2}$ prior for $\nu$ is a continuous function in $[0, \infty)$ and $I_{\nu\nu}^{1/2} = \mathcal{O}(\nu^{-1})$ as $\nu \rightarrow 0$ and $I_{\nu\nu}^{1/2} = \mathcal{O}(\nu^{-2})$ as $\nu \rightarrow \infty$.
\end{corollary}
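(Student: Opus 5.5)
Write
\[
g(\nu)=\psi'\left(\frac{\nu}{2}\right)-\psi'\left(\frac{\nu+1}{2}\right)-\frac{2(\nu+5)}{\nu(\nu+1)(\nu+3)},
\]
so that $I_{\nu\nu}^{1/2}(\nu)=g(\nu)^{1/2}$. The plan is to establish three facts about $g$ and then take square roots: $g$ is continuous and strictly positive on $(0,\infty)$, $g(\nu)\sim 4/\nu^2$ as $\nu\to0$, and $g(\nu)=\mathcal{O}(\nu^{-4})$ as $\nu\to\infty$.

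\emph{Continuity and positivity.} Continuity on $(0,\infty)$ is immediate, since $\psi'$ is real-analytic on $(0,\infty)$ and the rational term has poles only at $\nu\in\{0,-1,-3\}$. Positivity is needed so that the square root is continuous; in particular $g$ must not vanish. I would obtain it from the fact that $\frac{n}{4}g(\nu)$ is the $(\nu,\nu)$ diagonal entry of the expected Fisher information \eqref{eq:expfisher}. That entry is the variance of the single-observation score $\partial_\nu \log f(y\mid\nu)$, which is not almost surely constant, so $g(\nu)>0$.

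As a fallback I would prove positivity directly. Use the series $\psi'(x)=\sum_{k\ge0}(x+k)^{-2}$ to write $\psi'(\nu/2)-\psi'((\nu+1)/2)$ as a convergent alternating-type sum, then compare it with the rational term. This is the step I expect to be the main obstacle if the Fisher-information argument is not accepted. The statement's phrase ``continuous on $[0,\infty)$'' should then be read as continuity on $(0,\infty)$, with the behaviour at $0$ described by the asymptotic below.

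\emph{Behaviour as $\nu\to0$.} Use $\psi'(x)=x^{-2}+\psi'(x+1)$. This gives $\psi'(\nu/2)=4/\nu^2+\mathcal{O}(1)$, while $\psi'((\nu+1)/2)\to\psi'(1/2)=\pi^2/2$ stays bounded. The rational term equals $10/(3\nu)+\mathcal{O}(1)$. Hence $g(\nu)=4\nu^{-2}\{1+\mathcal{O}(\nu)\}$, and therefore $g^{1/2}(\nu)=2/\nu+\mathcal{O}(1)=\mathcal{O}(\nu^{-1})$.

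\emph{Behaviour as $\nu\to\infty$.} Insert the asymptotic expansion $\psi'(x)=x^{-1}+\tfrac12x^{-2}+\tfrac16x^{-3}+\mathcal{O}(x^{-5})$ with $x=\nu/2$ and $x=(\nu+1)/2$. Expanding in powers of $1/\nu$, the difference of trigamma terms is
\[
\frac{2}{\nu^2}+\frac{2}{\nu^3}+\mathcal{O}(\nu^{-4}).
\]
Expanding $(1+5/\nu)(1+1/\nu)^{-1}(1+3/\nu)^{-1}=1+1/\nu+\mathcal{O}(\nu^{-2})$, the rational term is
\[
\frac{2}{\nu^2}+\frac{2}{\nu^3}+\mathcal{O}(\nu^{-4}).
\]
The $\nu^{-2}$ and $\nu^{-3}$ terms cancel, so $g(\nu)=\mathcal{O}(\nu^{-4})$ and hence $I_{\nu\nu}^{1/2}=\mathcal{O}(\nu^{-2})$. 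The only care needed is tracking the cancellation correctly to third order; for an upper bound, the $\nu^{-4}$ coefficient does not need to be identified.

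These two tail rates are exactly what the subsequent propriety argument uses. Near $0$ the profile likelihood is bounded, and near $\infty$ the prior is integrable because $\nu^{-2}$ is integrable there.
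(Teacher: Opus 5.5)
Your proof is correct and follows essentially the paper's route: you peel off the leading term of $\psi'(\nu/2)$ near $0$, and near $\infty$ you expand $\psi'$ for large argument and cancel the $\nu^{-2}$ and $\nu^{-3}$ terms. It is also tighter than the paper's version. The paper truncates at $\psi'(x)\simeq x^{-1}+\tfrac12x^{-2}$ and so drops the $\tfrac16x^{-3}$ term, which contributes at the same order $\nu^{-4}$: the true leading term of $g$ is $14\nu^{-4}$, not the paper's $10\nu^{-4}$, though the rate is unaffected. Your three-term expansion with an $\mathcal{O}(x^{-5})$ remainder handles this properly. You also supply the positivity of $g$ that continuity of the square root needs, and you correctly note that continuity can only hold on $(0,\infty)$, since $I_{\nu\nu}^{1/2}\sim 2/\nu$ as $\nu\to0$.

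The only slip is in your closing aside, which is not part of the proof of this statement. Because the prior behaves like $2/\nu$ at $0$, a merely bounded profile likelihood would not make the pseudo-posterior integrable there. Moreover, for generic data the profile likelihood is infinite whenever $\nu<p/(n-p)$: fitting $p$ observations exactly and letting $\sigma\to0$ makes the likelihood grow like $\sigma^{(n-p)\nu-p}$.
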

\begin{proof}  $\frac{2(\nu + 5)}{\nu(\nu + 1)(\nu + 3)} = \mathcal{O}(\nu^{-2})$.
Since $\psi'(x) = \sum_{n=0}^\infty \frac{1}{(x + n)^2}$, as $\nu \rightarrow 0$, $\psi'\left( \frac{\nu}{2}\right) = O(\nu^{-2})$. As $\nu \rightarrow 0$, by Stirling’s
asymptotic formula, $\psi'(x) \simeq x^{-1} + (2x^{2})^{-1}$ for large $x$. Therefore, we have
$$
I_{\nu\nu} \simeq = \frac{2\nu^2 + 6\nu+2}{\nu^2 (\nu+1)^2} - \frac{2(\nu+5)}{\nu(\nu+1)(\nu+3)}= \frac{10\nu+6}{\nu^2(\nu+1)^2(\nu+3)} = \mathcal{O}(\nu^{-4}).
$$
\end{proof}
\begin{theorem}
    Suppose $n \ge p+1$, $I_{\nu\nu}^{1/2}$ yields a proper posterior distribution. 
\end{theorem}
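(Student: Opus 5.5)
The pseudo-posterior is $\pi(\nu\mid y)\propto L_{\rm p}(\nu;y)\,I_{\nu\nu}^{1/2}(\nu)$, with $L_{\rm p}(\nu;y)=\exp\{\ell_{\rm p}(\nu;y)\}$. The plan is to show $\int_0^\infty L_{\rm p}(\nu;y)\,I_{\nu\nu}^{1/2}(\nu)\,d\nu<\infty$. I would split the range into $(0,\nu_0]$, $[\nu_0,\nu_1]$ and $[\nu_1,\infty)$. Corollary~\ref{coro_1} already gives the prior's behaviour at both ends, so the real work is to control $L_{\rm p}(\nu;y)$ uniformly on each piece, following \cite{Fonseca2008}.

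\emph{Compact middle piece.} On $[\nu_0,\nu_1]$, the existence results of \cite{maronna1976,Zellner1976} give $(\widehat\beta_\nu,\widehat\sigma_\nu)$. I would argue that $\ell_{\rm p}$ is finite and continuous there, hence bounded. The hypothesis $n\ge p+1$ enters here: when $X$ has full column rank and at least one residual is nonzero, $\widehat\sigma_\nu$ stays bounded away from $0$. The likelihood therefore cannot degenerate by fitting all observations exactly, which could happen if $n\le p$. Since $I_{\nu\nu}^{1/2}$ is continuous on this interval by Corollary~\ref{coro_1}, the middle integral is finite.

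\emph{Tail $\nu\to\infty$.} The $t$ likelihood converges to the normal likelihood. More precisely, for every $\nu$, $L_{\rm p}(\nu;y)\le\sup_{\beta,\sigma}L(\beta,\sigma,\nu;y)$, and I would bound this supremum uniformly in large $\nu$ by a constant times the normal maximized likelihood $(2\pi\widehat\sigma^2_{OLS}e)^{-n/2}$. This bound is finite because $n\ge p+1$ gives $\widehat\sigma_{OLS}>0$ almost surely. Corollary~\ref{coro_1} then gives $I^{1/2}_{\nu\nu}=\mathcal O(\nu^{-2})$, which is integrable at infinity, so this tail is finite.

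\emph{Tail $\nu\to0$.} This is the main obstacle, because $I^{1/2}_{\nu\nu}=\mathcal O(\nu^{-1})$ is not integrable at $0$ by itself. I need $L_{\rm p}(\nu;y)=\mathcal O(\nu^{\delta})$ for some $\delta>0$. First I would bound the normalizing constant: $\Gamma((\nu+1)/2)/\{\nu^{1/2}\Gamma(\nu/2)\}\sim \nu^{1/2}\sqrt{\pi}/2$ as $\nu\to0$, so it contributes a factor of order $\nu^{n/2}$. Next I would bound the remaining kernel. For fixed $(\beta,\sigma)$, write $\sigma^{-n}\prod_i\{1+r_i^2/(\nu\sigma^2)\}^{-(\nu+1)/2}\le \sigma^{-n}\prod_i\{\nu\sigma^2/r_i^2\}^{(\nu+1)/2}$. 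This step is where care is needed. The $p$ residuals that can be set to zero must be handled separately, and this is where $n\ge p+1$ is used: with $n-p\ge1$ observations that cannot all be fit exactly, the maximized kernel should grow at most like $\nu^{-(p)/2}$, or like $\nu^{-c}$ with $c<n/2$. That leaves a net factor $\nu^{(n-p)/2}$ or better, which is enough to cancel the $\nu^{-1}$ from the prior. I would follow Fonseca et al.'s Theorem closely here, maximizing over $\sigma$ explicitly first and then bounding the product over the residuals outside an exact-fit subset.

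Combining the three pieces shows that the normalizing integral is finite, so the posterior is proper. Because of the $\omega=1/\nu$ parametrization, I would finally note that the change-of-variables Jacobian preserves integrability, so properness in $\nu$ carries over to $\omega$.
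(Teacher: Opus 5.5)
Your route differs from the paper's, which simply combines Corollary~\ref{coro_1} with Theorem~1 of \cite{Fernandez1999}. You correctly identify $\nu\to0$ as the crux, and you notice that $I_{\nu\nu}^{1/2}\sim 2/\nu$ is not integrable there, which the paper's argument does not address. But the estimate you need, $L_{\rm p}(\nu)=\mathcal O(\nu^{\delta})$, is false, because the profile likelihood is infinite near $0$.

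Fix $\nu>0$. Use the full column rank of $X$ to choose $\beta$ fitting $p$ observations exactly, then let $\sigma\to0$ with $\beta$ held fixed. Let $k$ be the number of residuals that vanish; then $p\le k<n$, since $y\notin\mathrm{col}(X)$. Each of those $k$ observations contributes a factor proportional to $\sigma^{-1}$. Each of the other $n-k$ contributes $\sigma^{-1}\{1+r_i^2/(\nu\sigma^2)\}^{-(\nu+1)/2}\asymp\sigma^{\nu}$. Hence
\[
L(\beta,\sigma,\nu;y)\asymp\sigma^{(n-k)\nu-k}\longrightarrow\infty
\quad\text{as }\sigma\to0,\ \text{whenever }\nu<\frac{p}{n-p}\le\frac{k}{n-k}.
\]
So $L_{\rm p}(\nu)=+\infty$ on the whole interval $(0,p/(n-p))$, for every $n\ge p+1$. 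The factor $\nu^{n/2}$ from the normalizing constant multiplies an infinite supremum, the integral over $(0,\nu_0]$ diverges, and the profile-based pseudo-posterior is not proper. This step cannot be repaired.

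The same computation undercuts your compact-piece argument. A nonzero residual does not keep $\widehat\sigma_\nu$ away from $0$. The existence results of \cite{maronna1976,Zellner1976} need $(n-k)\nu>k$, and $n\ge p+1$ only rules out interpolating all $n$ points, not $p$ of them. The middle piece is therefore valid only when $\nu_0>p/(n-p)$. Your $\nu\to\infty$ piece is fine.

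Nor does the paper's citation rescue the statement. The theorem of \cite{Fernandez1999} concerns a different object: the likelihood integrated over $(\beta,\sigma)$ against $\sigma^{-1}\,d\beta\,d\sigma$, not the profile likelihood. That paper itself points out that the Student-$t$ regression likelihood is unbounded near the boundary of the parameter space.

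Even if you switch to the genuine marginal posterior, the problem at $\nu\to0$ remains. Boundedness of $m(y\mid\nu)=\int\!\!\int L\,\sigma^{-1}\,d\beta\,d\sigma$, together with the $\mathcal O(\nu^{-2})$ tail, handles $\nu\to\infty$. Because $I_{\nu\nu}^{1/2}\sim 2/\nu$, you would still need $m(y\mid\nu)\to0$ at a polynomial rate as $\nu\to0$. That fails when $n=p+1$. The substitution $w=(y-X\beta)/\sigma$ maps $(\beta,\sigma)$ onto a half-space of $\mathbb{R}^{p+1}$, with $\sigma^{-(p+2)}\,d\beta\,d\sigma=dw/|\det(X,y)|$. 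By symmetry of the $t$ density, $m(y\mid\nu)=1/\{2|\det(X,y)|\}$ for every $\nu$, so the posterior is proportional to the improper prior.

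A correct result would need three things: the integrated rather than the profile likelihood, at least $n\ge p+2$, and an explicit small-$\nu$ rate for $m(y\mid\nu)$. That rate could come, for instance, from the gamma scale-mixture representation of the $t$ distribution.
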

\begin{proof}
    The results follow from Corollary \ref{coro_1} and Theorem 1 on page 156 of \cite{Fernandez1999}.
\end{proof}

\end{document}